\documentclass[journal, doublecolumn]{IEEEtran}
\usepackage{amssymb}
\usepackage{amsthm}
\usepackage{extarrows}
\usepackage{graphicx}
\usepackage{algorithm}
\usepackage{algorithmic}
\usepackage{cite}
\usepackage{bm}
\usepackage{float}
\usepackage{multirow}
\usepackage{color}
\usepackage{subfigure}
\usepackage{caption}
\usepackage{epstopdf}
\usepackage{hyperref}
\usepackage{fixltx2e}
\usepackage{longtable}
\usepackage{diagbox}
\usepackage{changepage}
\usepackage{amsmath, amsfonts}
\usepackage{dsfont}
\usepackage{makecell}
\usepackage{optidef}

\theoremstyle{definition}

\theoremstyle{plain}
\newtheorem{theorem}{Theorem}
\newtheorem{lemma}{Lemma}
\newtheorem{proposition}{Proposition}
\newtheorem{corollary}{Corollary}

\begin{document}
\title{Single-Waveguide Multiple–Pinching-Antenna Systems: OMA versus NOMA}

\author{Yanyu~Cheng,
Hao~Li,
Zhen~Wang,
and~Zhiguo~Ding,~\IEEEmembership{Fellow,~IEEE}

\thanks{
Yanyu~Cheng and Zhiguo~Ding are with the School of Electrical and Electronic Engineering, Nanyang Technological University, Singapore 639798 (e-mail: ycheng022@e.ntu.edu.sg; zhiguo.ding@ntu.edu.sg).

Hao~Li and Zhen~Wang are with the School of Cyberspace, Hangzhou Dianzi University, Hangzhou 310018, China (e-mail: \{242270075; wangzhen\}@hdu.edu.cn).
}
}
\maketitle

\begin{abstract}
This paper investigates the performance of a pinching-antenna (PA) system with a signal waveguide and multiple pinching antennas to serve users distributed across multiple rooms. The performance of the system is evaluated through a comparative analysis under both orthogonal multiple access (OMA) and non-orthogonal multiple access (NOMA) schemes. Specifically, this paper derives closed-form expressions for the outage probability (OP) and ergodic rate (ER) in each scheme. Furthermore, asymptotic analyses are conducted to characterize the system behavior in the high signal-to-noise ratio (SNR) regime. Extensive Monte Carlo simulations are utilized to validate the accuracy of the analytical derivations. The comparative results can be summarized as follows: 
1) in the downlink fixed-rate scenario, whether OMA or NOMA achieves better outage performance depends on system parameters, such as the number of users and power allocation coefficients; 
2) in the uplink fixed-rate scenario, the outage performance of NOMA is inferior to that of OMA in the high-SNR regime, and the decay rate of the OP for NOMA users depends on the rate thresholds; 
and 3) for both uplink and downlink adaptive-rate scenarios, the rate performance comparison of the two schemes depends on system parameters in the low-SNR regime, whereas OMA generally outperforms NOMA in the high-SNR regime.
\end{abstract}

\begin{IEEEkeywords}
Non-orthogonal multiple access, orthogonal multiple access, pinching-antenna systems.
\end{IEEEkeywords}

\IEEEpeerreviewmaketitle

\section{Introduction}
The widespread deployment of fifth-generation (5G) networks has revolutionized connectivity through innovations such as massive multiple-input
multiple-output (MIMO) and millimeter wave (mmWave) \cite{5G:01, 5G:02}. However, the inherent limitations regarding spectrum scarcity and energy efficiency, combined with the unprecedented demands of emerging applications like holographic telepresence and the internet of everything, are driving the evolution toward sixth-generation (6G) networks \cite{6G:01, 6G:02, 6G:03}.

To address this demand, flexible antenna systems have recently attracted extensive research interest, with representative examples including reconfigurable intelligent surfaces (RIS), movable antenna (MA), and fluid antenna (FA) \cite{RIS:01, RIS:02, MAS:01, FAS:01}.
RIS are planar metasurfaces composed of a large number of programmable reflecting elements, each of which can independently adjust the phase of the incident signal, thereby enabling passive beamforming to control the wireless propagation environment and significantly enhance network coverage and communication performance \cite{IRS:02, IRS:03, extract:01-RIS}.
Along a different line of research, to overcome the inherent limitations of conventional fixed-position antennas (FPAs) in fully exploiting spatial diversity, MA systems have been proposed. By allowing dynamic position adjustment of antenna elements within a confined region to align with favorable channel conditions, the MA system can significantly improve spectral efficiency and interference mitigation \cite{MAS:01, MAS:02, MAS:03, MAS:04}.
Moreover, benefiting from advances in flexible metallic materials, the FA system has been proposed as a reconfigurable technology that employs conductive fluids to dynamically adjust the antenna position and shape \cite{FAS:01}. By exploiting continuous spatial diversity, the FA system effectively mitigates small-scale fading \cite{FAS:02}.

Despite existing research on flexible antenna systems has demonstrated significant potential along various directions, inherent limitations remain in combating free-space path loss and line-of-sight (LoS) blockage \cite{PASS:02}. To address these challenges, NTT DOCOMO proposed the pinching-antenna (PA) system in 2021 as a new flexible communication paradigm \cite{PASS:01}. The PA system employs a dielectric waveguide on which detachable pinching antennas (PAs) are deployed. In this architecture, signals are radiated through controlled leakage from the  PAs without requiring additional radio-frequency resources. Overall, owing to its high deployment flexibility and low implementation cost, the PA system has emerged as a promising solution for enhancing wireless coverage and reliability in complex propagation environments.

\subsection{Related Work}
\subsubsection{Conventional Flexible Antenna Systems}
In \cite{FlexibleAntenna:01}, the authors summarized key advancements in flexible antenna systems, highlighting low cost, high performance, and ease of installation, thereby enabling applications in wearable devices.
In \cite{FAS:01}, the authors analyzed an FA system where a single antenna dynamically switches among multiple candidate positions within a small linear space to achieve the strongest signal, demonstrating that a single-antenna FA system can outperform traditional multi-antenna maximum ratio combining (MRC) systems even within minimal physical space.
Subsequently, in \cite{FAS:03}, the performance limits of FA systems were investigated, showing that a single-antenna FA system can achieve the capacity performance of multi-antenna MRC systems within a confined space.
In \cite{FAS:04}, the authors proposed a machine learning-based framework for channel state information (CSI) estimation and port optimization, revealing the potential of artificial intelligence (AI)-enabled FA systems in wireless communications.
Moreover, the performance gains of MA systems compared to conventional fixed-position antennas were explored in \cite{MAS:01, MAS:04}, specifically regarding signal enhancement and spatial multiplexing.
In \cite{MAS:02, MAS:03}, the research on MA systems has been extended to the field of sensing, thereby underscores the promise of MA technology for wireless systems.
These innovations collectively represent a significant step toward more efficient and versatile wireless networks.

\subsubsection{Pinching Antenna Systems}
In \cite{PASS:01}, an innovative concept was introduced, enabling flexible antenna deployment over large spatial regions, thereby inspiring extensive research in the field of PA systems.
In \cite{PAS:08}, the authors developed analytical frameworks for single- and multi-antenna PA system configurations, deriving closed-form ergodic rate (ER) expressions and proving that multiple PAs enhance data rates. 
Their results highlight the ability of PA systems to establish strong LoS links, mitigate large-scale path loss, and outperform conventional antenna systems in both orthogonal multiple access (OMA) and non-orthogonal multiple access (NOMA) scenarios. 
In \cite{PAS:07}, the authors investigated the impact of PA systems on the user outage probability (OP) across both LoS and non-line-of-sight (NLoS) scenarios. 
The results indicated that while PA systems based transmission exhibits superior stability compared to conventional fixed-antenna systems in LoS links, no significant performance gain was observed in NLoS environments.
Moreover, several studies have shown that deploying multiple PAs within a single space and optimizing their positions and quantities can considerably enhance overall system performance~\cite{PAS:06, PAS:05, PAS:04}. 
An uplink rate-splitting multiple access (RSMA)-based PA system was explored in \cite{PAS:09}, and the authors characterized its outage performance.

\subsection{Motivation and Contributions}
When users are located in different rooms, traditional single-antenna systems struggle to provide simultaneous LoS links to all users. To address this, the deployment of a PA system offers a practical solution. In this context, comparing the performance of OMA, such as time-division multiple access (TDMA), and NOMA remains of significant value. 
Specifically, in TDMA, users transmit at full power but must share time-domain resources. Conversely, in NOMA, users transmit throughout the entire time duration but are required to share the total transmit power.
Therefore, this study addresses a fundamental question: which scheme, OMA or NOMA, provides superior performance under different scenarios? The main contributions of this work are summarized as follows:
\begin{itemize}
    \item We propose a novel PA-based communication model, where multiple PAs are strategically positioned along a single waveguide to establish LoS links for all distributed indoor users in different rooms.
    \item In the fixed-rate system, this paper derives closed-form expressions for the OPs in the downlink for both OMA and NOMA schemes. For the uplink, explicit expressions for OPs are derived. Given the complexity of obtaining closed-form expressions for a general multi-user scenario, this study focuses on a two-user special case. It derives approximate closed-form expressions for OPs. Finally, Monte Carlo simulations are conducted to verify the accuracy and consistency of the theoretical derivations.
    \item In the adaptive-rate systems, this paper derives closed-form expressions for ERs of users in the downlink under both OMA and NOMA schemes. Similar to the analysis of OP, explicit expressions for ERs of users in the uplink are derived, and approximate closed-form expressions for a two-user scenario are derived. Furthermore, the paper develops asymptotic expressions for ERs in the high signal-to-noise ratio (SNR) regime.
    \item This paper compares the performance of OMA and NOMA in both uplink and downlink scenarios. The key insights are summarized as follows: 1) in the downlink fixed-rate scenario, whether OMA or NOMA achieves better outage performance depends on system parameters, such as the number of users and power allocation coefficients; 2) in the uplink fixed-rate scenario, the outage performance of NOMA is inferior to that of OMA in the high-SNR regime, and the decay rate of the OP for NOMA users depends on the rate thresholds; and 3) for both uplink and downlink adaptive-rate scenarios, the rate performance comparison of the two schemes depends on system parameters in the low-SNR regime, whereas OMA generally outperforms NOMA in the high-SNR regime.
\end{itemize}

\section{System Model}

\begin{figure}[t]
\centering
\includegraphics[width=\linewidth]{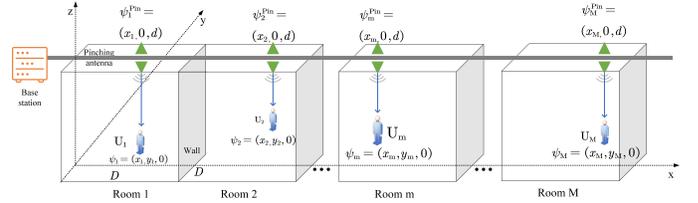}
\caption{A pinching-antenna system with single waveguide and multiple pinching antennas.}
\label{fig_system_model}
\end{figure}
Consider a wireless communication system where a base station (BS) serves $M$ users located in $M$ square rooms of side length $D$, with each room containing exactly one user.
The users, denoted as $\mathrm{U}_1, \mathrm{U}_2, \ldots, \mathrm{U}_M$, are uniformly distributed within their respective rooms. 
To geometrically model the system, a three-dimensional Cartesian coordinate system is established (Fig.~\ref{fig_system_model}), where the location of $\mathrm{U}_m$ is defined as $\psi_m = [x_m, y_m, 0]$.
A waveguide is installed along the ceiling of each room, aligned parallel to the $x$-axis at a height $d$ above the floor ($z = d$). 
To enable simultaneous multiuser transmission, $M$ PAs are deployed on the waveguide, i.e., one PA is deployed in each room.
The key feature of the PA is deployment flexibility. 
For the considered system, the $m$-th PA is placed at the position closest to $\mathrm{U}_m$, i.e., $\psi_m^{Pin} = [x_m, 0, d]$.

\subsection{Downlink}
\subsubsection{OMA}
For the considered system, TDMA is used for the OMA scenario. 
During the $m$-th time slot, $\mathrm{U}_m$ is served, and the data rate of $\mathrm{U}_m$ is given by
\begin{equation}\label{Eq:model-downlink-OMA-rate}
\begin{split}
R_m^{D,OMA} = \frac{1}{M}\log_2\left(1+\frac{\eta \rho_{D}}{|\mathbf{\psi}_m^{Pin}-\mathbf{\psi}_m|^2}\right),
\end{split}
\end{equation}
where $\eta = \frac{c^2}{16\pi^2f_c^2}$, $c$ is the speed of light, $f_c$ is the carrier frequency, $|\mathbf{\psi}_m^{Pin}-\mathbf{\psi}_m|$ is the distance between the PA and $\mathrm{U}_m$, $\rho_D = \frac{P_t}{\sigma^2}$ is the transmit SNR of the BS,  $P_t$ is the transmit power of the BS, and $\sigma^2$ is the additive white Gaussian noise (AWGN) at $\mathrm{U}_m$.

\subsubsection{NOMA}

For simultaneously serving multiple users, the users' signals must to be superimposed as $s=\sum_{m=1}^{M}\sqrt{\alpha_m}s_m$, where $s_m$ is $\mathrm{U}_m$'s signal, $\alpha_m$ is the power allocation coefficient for $\mathrm{U}_m$, with $\alpha_1 > \alpha_2>\dots>\alpha_M$ and $\sum_{m=1}^{M}\alpha_m=1$~\cite{extract:03-NOMA}.
The users are ordered according to their channel conditions in ascending order. 
Without loss of generality, we assume that $\mathrm{U}_m$ is the $m$th user, and $|h_1|^2\leq\cdots\leq|h_M|^2$, where $|h_m|^2 = \frac{\eta}{|\mathbf{\psi}_m^{Pin}-\mathbf{\psi}_m|^2}$.

According to the principle of power-domain NOMA, $\mathrm{U}_m$ decodes $\mathrm{U}_k$'s signal, $1\leq k\leq m$, before decoding its own signal. Consequently, the data rate for $\mathrm{U}_m$ to decode $\mathrm{U}_k$'s signal is given by
\begin{align}\label{Eq:systemmodel-downlink-NOMA-rate}
R_{m,k}^{D,NOMA}=\log_2\left(1+\frac{|h_m|^2 \frac{\rho_D}{M}  \alpha_k}{\sum_{j=k+1}^{M}|h_m|^2 \frac{\rho_D}{M} \alpha_j+1}\right),
\end{align}
where $\rho_D=\frac{P_t}{\sigma^2}$.

\subsection{Uplink}
\subsubsection{OMA}
For OMA, the uplink transmission process is similar to that of the downlink. Therefore, the data rate of $\mathrm{U}_m$ is given by
\begin{equation}\label{Eq:systemmodel-uplink-OMA-rate}
\begin{split}
R_m^{U,OMA} = \frac{1}{M}\log_2\left(1+\frac{\eta \rho_U}{|\mathbf{\psi}_m^{Pin}-\mathbf{\psi}_m|^2}\right).
\end{split}
\end{equation}
Here, $\rho_U=\frac{P_u}{\sigma^2}$ and $P_u$ is the total transmit power of the user in the uplink.

\subsubsection{NOMA}
In the uplink system, we assume that no power control is applied for users. Thus, the transmit power of each user is fixed at $P_u$. The BS receives the superimposed signal from $M$ users, which is given by
\begin{equation}\label{systemmodel-uplink-OMA-signal}
\begin{split}
y=\sum_{j=1}^{M} \sqrt[]{P_u} |h_j|^2x_j+\sigma^2,
\end{split}
\end{equation}

According to the principle of successive interference cancellation (SIC), the user $\mathrm{U}_M$, who statistically has a stronger channel, is decoded first~\cite{extract:02-NOMA}. By successively subtracting the decoded signals, the weaker users (e.g., $\mathrm{U}_{M-1}$) are decoded in sequence. This process continues iteratively until all users' signals are successfully decoded. Therefore, the achievable rate of $\mathrm{U}_m$ is given by
\begin{equation}\label{Eq:systemmodel-uplink-NOMA-rate}
\begin{split}
R_{m}^{U,NOMA}=\log_2\left(1+\frac{|h_m|^2}{\sum_{j=1}^{m-1}|h_j|^2+\frac{1}{\rho_U} }\right),
\end{split}
\end{equation}
where $\rho_U=\frac{P_u}{\sigma^2}$, and for $m=1$, we have $\sum_{j=1}^{m-1}|h_j|^2=0$.

\section{Performance Analysis of Downlink Transmission}
\subsection{OMA}
In the OMA scenario, the distance statistics of the users are first characterized, followed by derivation of closed-form expressions for the OP and ER.

\subsubsection{Distance Statistics}
Based on \cite{PAS:07}, we can obtain the following lemma.
\begin{lemma}\label{lemma:distance-statistics}
Let $Z_1 = | \psi_{m}^{Pin} - \psi_m |^{2}$, where $\psi_m^{Pin} = [x_m, 0, d]$, $\psi_m = [x_m, y_m, 0]$, $\forall m$. Its PDF and CDF are given by
\begin{equation}\label{Eq:downlink-OMA-PDF}
\begin{split}
f_{Z_1}(z) = 
\begin{cases} 
0, &z < d^2, \\ 
\frac{1}{D\sqrt{z-d^2}}, & d^2 \leq z < d^2 + \frac{D^2}{4} ,\\ 
0, & d^2 + \frac{D^2}{4} \leq z, 
\end{cases}
\end{split}
\end{equation}
and
\begin{equation}\label{Eq:downlink-OMA-CDF}
\begin{split}
F_{Z_1}(z) = 
\begin{cases} 
0, &z < d^2, \\ 
\frac{2\sqrt{z-d^2}}{D}, & d^2 \leq z < d^2 + \frac{D^2}{4}, \\ 
1, & d^2 + \frac{D^2}{4} \leq z, 
\end{cases}
\end{split}
\end{equation}
respectively.
\end{lemma}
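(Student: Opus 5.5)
Since the $x$-coordinates of $\psi_m^{Pin}$ and $\psi_m$ coincide, the squared distance reduces to $Z_1 = y_m^2 + d^2$, so the whole lemma follows from the distribution of $y_m$. The plan is to compute the CDF of $Z_1$ directly from the uniform law of $y_m$ and then differentiate to get the PDF.

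\textbf{Distribution of $y_m$.} From Fig.~\ref{fig_system_model}, the waveguide runs along the centre line of each room parallel to the $x$-axis, so the room's $y$-extent is $[-D/2, D/2]$. Because $\mathrm{U}_m$ is uniform over the square room, its marginal $y_m$ is uniform on this interval with density $1/D$. This geometric reading, that the waveguide bisects the room, is the one step needing care: it is what makes $y_m^2 \le D^2/4$ and fixes the support endpoint $d^2 + D^2/4$ in the statement. It is consistent with the model of \cite{PAS:07}, on which the lemma is based.

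\textbf{CDF.} For $z < d^2$ the event $\{y_m^2 + d^2 \le z\}$ is empty, so $F_{Z_1}(z) = 0$. For $d^2 \le z < d^2 + D^2/4$,
\begin{equation*}
F_{Z_1}(z) = \Pr\left(|y_m| \le \sqrt{z-d^2}\right) = \frac{2\sqrt{z-d^2}}{D},
\end{equation*}
since the interval $[-\sqrt{z-d^2}, \sqrt{z-d^2}]$ lies inside $[-D/2, D/2]$. For $z \ge d^2 + D^2/4$ the event is certain, so $F_{Z_1}(z) = 1$.

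\textbf{PDF.} Differentiating on the middle interval gives
\begin{equation*}
f_{Z_1}(z) = \frac{1}{D\sqrt{z-d^2}},
\end{equation*}
and $f_{Z_1}(z) = 0$ elsewhere. The singularity at $z = d^2$ is integrable. As a consistency check, $\int_{d^2}^{d^2+D^2/4} f_{Z_1}(z)\,dz = \frac{2}{D}\cdot\frac{D}{2} = 1$.
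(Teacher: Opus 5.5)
Your derivation is correct: reducing to $Z_1 = y_m^2 + d^2$ with $y_m$ uniform on $[-D/2, D/2]$ and computing $F_{Z_1}(z)=\Pr(|y_m|\le\sqrt{z-d^2})$ gives both stated expressions, and the centred-waveguide assumption you flag is exactly what the stated support $[d^2, d^2+D^2/4)$ encodes. The paper gives no in-text proof and only cites \cite{PAS:07} for the lemma, so your direct change-of-variables argument is a self-contained derivation of the cited result.
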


\subsubsection{Outage Probability}
Based on whether the $\mathrm{U}_m$ achieves the target rate within a given time slot, the OP is given by
\begin{equation}\label{Eq:downlink-OMA-OP}
\begin{split}
\mathbb{P}_m^{D,OMA}=\mathrm{P_r}(R_m^{D,OMA}<\tilde R_m) ,
\end{split}
\end{equation}
where $\tilde R_m$ is the target rate of $\mathrm{U}_m$. Therefore, the OP of $\mathrm{U}_m$ can be derived according to the following Theorem \ref{Throrem:downlink-OMA-OP}.
\begin{theorem}\label{Throrem:downlink-OMA-OP}
The OP of $\textrm{U}_m$ is given by
\end{theorem}
\begin{equation}\label{Eq:downlink-OMA-OP-close}
\mathbb{P}_m^{D,OMA}=
\begin{cases}
1, & \beta_m<d^2, \\
1-F_{Z_1}(\beta_m) , & d^2 \le \beta_m < d^2+\frac{D^2}{4}, \\
0, & d^2+\frac{D^2}{4} \le \beta_m,
\end{cases}
\end{equation}
where $\beta_m=\frac{\eta \rho_D}{(2^{M \tilde{R}_m}-1)}$. 

\begin{proof}
    Based on \eqref{Eq:model-downlink-OMA-rate}, we can derive $\mathbb{P}_m^{D,OMA}=\mathrm{P_r}(\frac{1}{M}\log_2(1+\frac{\eta \rho_D}{Z_1})<\tilde R_m)=1-\mathrm{P_r}(Z_1 \le \frac{\eta \rho_D}{(2^{M\tilde R_m}-1)})$. 
    According to \eqref{Eq:downlink-OMA-CDF}, we have $\mathbb{P}_m^{D,OMA}= 1$ for $\beta_m \in (0, d^2)$, $\mathbb{P}_m^{D,OMA}=1-F_{Z_1}(\beta_m)$ for $\beta_m \in [d^2, d^2+\frac{D^2}{4} )$, and $\mathbb{P}_m^{D,OMA}=0$ for $\beta_m \in [d^2+\frac{D^2}{4},+\infty)$. 
    This completes the proof.
\end{proof}

Next, the following proposition for the high-SNR approximation of OP can be obtained.
\begin{proposition}\label{Proposition:downlink-OMA-OP-highSNR}
In the high-SNR regime, $\mathbb{P}_m^{D,OMA}$ can be given by $\mathbb{P}_m^{D,OMA}=0$, where $\mathbb{P}_m^{D,OMA}$ becomes zero for $\rho_D \ge \frac{(d^2+\frac{D^2}{4})(2^{M\tilde R_m} -1)}{\eta}$.
\end{proposition}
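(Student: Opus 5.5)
The plan is to read the result directly off Theorem~\ref{Throrem:downlink-OMA-OP}, using the fact that $\beta_m=\frac{\eta\rho_D}{2^{M\tilde R_m}-1}$ is linear and strictly increasing in $\rho_D$. All other quantities are fixed: $\eta$, $M$, $\tilde R_m$ (with $2^{M\tilde R_m}-1>0$ since $\tilde R_m>0$), and the geometry $d$, $D$. So letting $\rho_D\to\infty$ sends $\beta_m\to\infty$. Since the support of $Z_1$ is bounded, $\beta_m$ must eventually exceed its upper endpoint $d^2+\frac{D^2}{4}$. Once that happens, the third branch of \eqref{Eq:downlink-OMA-OP-close} applies and gives $\mathbb{P}_m^{D,OMA}=0$.

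Concretely, I will carry out the following steps.
\begin{enumerate}
\item Write the condition for the zero branch, $\beta_m\ge d^2+\frac{D^2}{4}$, and substitute the definition of $\beta_m$.
\item Multiply both sides by the positive quantity $(2^{M\tilde R_m}-1)/\eta$. This gives exactly
\begin{equation*}
\rho_D\ge \frac{\left(d^2+\frac{D^2}{4}\right)\left(2^{M\tilde R_m}-1\right)}{\eta}.
\end{equation*}
\item Conclude that the OP vanishes identically for all SNRs beyond this finite threshold. Hence its high-SNR approximation is $\mathbb{P}_m^{D,OMA}=0$.
\end{enumerate}

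As a consistency check, I will also note the behavior on the intermediate branch. Substituting Lemma~\ref{lemma:distance-statistics} gives $1-F_{Z_1}(\beta_m)=1-\frac{2\sqrt{\beta_m-d^2}}{D}$, which decreases continuously to $0$ as $\beta_m\uparrow d^2+\frac{D^2}{4}$. The transition to zero is therefore continuous rather than an artifact of the piecewise definition. Alternatively, one can argue directly: $Z_1\le d^2+\frac{D^2}{4}$ almost surely, because $|y_m|\le D/2$ for a user in a room of side length $D$ with the waveguide along its center line. Then $\Pr(Z_1\le\beta_m)=1$ whenever $\beta_m$ is at least that bound.

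There is no real obstacle; the only point to handle carefully is the boundary value $\beta_m=d^2+\frac{D^2}{4}$. There the middle branch is excluded by its strict inequality, and the third branch (equivalently $F_{Z_1}=1$) gives $0$, so the threshold is inclusive, matching the ``$\ge$'' in the statement.
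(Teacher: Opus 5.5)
Your proposal is correct and follows the same route as the paper, whose proof simply notes that $\beta_m\to\infty$ as $\rho_D\to\infty$ and that $F_{Z_1}(\beta_m)=1$ once $\beta_m$ passes the upper end of the support, so that Theorem~\ref{Throrem:downlink-OMA-OP} gives zero outage. You additionally invert $\beta_m\ge d^2+\frac{D^2}{4}$ to obtain the stated SNR threshold and check that the boundary is inclusive, which the paper leaves implicit.
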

\begin{proof}
    When $\rho_D \to \infty$, we have $\beta_m \to \infty$. Therefore, as $F_{Z_{1}}(\infty) = 1$, we have $\mathbb{P}_{m}^{D,OMA} = 0$. The proof is complete. 
\end{proof}

\subsubsection{Ergodic Rate}
Based on \eqref{Eq:model-downlink-OMA-rate}, the following theorem for the ER of $\textrm{U}_m$ can be obtained.
\begin{theorem}\label{Theorem:downlink-OMA-ER}
The ER of $\textrm{U}_m$ is given by
\begin{equation}\label{Eq:downlink-OMA-ER}
\begin{split}
\mathbb{E}\left(R_m^{D,OMA} \right )=\frac{1}{M} \left ( \Lambda_1 + \Lambda_2 \right ) ,
\end{split} 
\end{equation}
where $\Lambda_1=\log_2 \left ( 1+ \frac{\eta\rho_D }{d^2+\frac{D^2}{4} } \right  )$, $\Lambda_2 = \frac{4}{D\ln2} \left [ \sqrt[]{d^2+\eta \rho_D } \arctan\left ( \frac{D}{2 \ \sqrt[]{d^2+\eta \rho_D } } \right ) -d
\arctan\left ( \frac{D}{2d} \right )  \right ]$.

\begin{proof}
Based on \eqref{Eq:downlink-OMA-PDF} and \eqref{Eq:downlink-OMA-CDF}, the ER of $\mathrm{U}_m$ is given by
$
\mathbb{E}\left(R_m^{D,OMA} \right )=\int_{d^2}^{d^2+\frac{D^2}{4} } \frac{1}{M} \log_2 \left ( 1+ \frac{\eta \rho_D }{z} \right)f_{Z_1}(z) dz
$.
For the above expression, since $F_{Z_1}(z)$ is continuous and differentiable over the interval $\left[d^2,d^2+\frac{D^2}{4}\right]$, by applying the integration by parts for the Riemann–Stieltjes integral, it is given by
\begin{equation}
\begin{split}
\mathbb{E}\left(R_m^{D,OMA} \right) \xlongequal{t=\sqrt[]{Z-d^2} } \frac{1}{M} \left [ \Lambda_1 + \frac{2\eta \rho_D }{D \ln2}  \Gamma_1   \right],
\end{split}
\end{equation}
where $\Lambda_1=\log_2 \left ( 1+ \frac{\eta\rho_D }{d^2+\frac{D^2}{4} } \right  )$, $\Gamma_1 =\int_{0}^{\frac{D}{2} }\frac{2t^2}{t^4+(2d^2+\eta \rho_D )t^2+d^2\eta\rho_D}dt$, which is given by $\Gamma_1 = \frac{2}{\eta \rho_D} \left [ \sqrt[]{d^2+\eta \rho_D } \arctan\left ( \frac{D}{2 \ \sqrt{d^2+\eta \rho_D } } \right ) -d \arctan\left ( \frac{D}{2d} \right )  \right ]$.
The proof is complete.
\end{proof}
\end{theorem}

Based on \eqref{Theorem:downlink-OMA-ER}, the following proposition for the asymptotic ER of $\mathrm{U}_m$ can be obtained.
\begin{proposition}\label{Proposition:downlink-OMA-ER-highSNR}
In the high-SNR regime, the approximation of $\mathbb{E}\left(R_m^{D,OMA} \right)$ is given by
\begin{equation}\label{Eq:downlink-OMA-ER-highSNR}
\begin{split}
\mathbb{E}\left(R_m^{D,OMA} \right)^{\infty} \approx \frac{1}{M}\log_2 \left ( \Lambda_1 + \Lambda_3\right  ),
\end{split}
\end{equation}
where $\Lambda_3=\frac{D}{4\ln2}\left (\frac{D}{2}- d \arctan\left(\frac{D}{2d} \right )\right) $.
\end{proposition}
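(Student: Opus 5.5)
The plan is to start from the closed form in Theorem~\ref{Theorem:downlink-OMA-ER}, $\mathbb{E}(R_m^{D,OMA})=\frac{1}{M}(\Lambda_1+\Lambda_2)$, and let $\rho_D\to\infty$. The term $\Lambda_1$ is kept exactly, since it carries the $\log_2\rho_D$ growth. The SNR-dependent part of $\Lambda_2$ is replaced by its limit, giving an SNR-independent constant.

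\textbf{Step 1 (limit of $\Lambda_2$).} Write $u=\sqrt{d^2+\eta\rho_D}$, so $u\to\infty$ as $\rho_D\to\infty$. From the Taylor series $\arctan x=x-\frac{x^3}{3}+O(x^5)$ we obtain
\begin{equation*}
u\arctan\left(\frac{D}{2u}\right)=\frac{D}{2}-\frac{D^3}{24u^2}+O(u^{-4}).
\end{equation*}
Hence $\Lambda_2$ converges to a constant. Its leading term is the limit value $\frac{D}{2}-d\arctan\left(\frac{D}{2d}\right)$ multiplied by the prefactor in front of the bracket in $\Lambda_2$. The error is $O(1/\rho_D)$, so it is negligible against $\Lambda_1=\log_2\rho_D+O(1)$.

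\textbf{Step 2 (identify the constant with $\Lambda_3$).} The bracket $\frac{D}{2}-d\arctan\left(\frac{D}{2d}\right)$ in $\Lambda_3$ matches the Step~1 limit exactly. I will therefore rewrite the limiting constant with the prefactor grouped as in $\Lambda_3$, namely $\frac{D}{4\ln 2}$. This amounts to redoing the $\Gamma_1$ computation from the proof of Theorem~\ref{Theorem:downlink-OMA-ER} directly in the limit. As $\eta\rho_D\to\infty$, the integrand of $\frac{2\eta\rho_D}{D\ln2}\Gamma_1$ tends to $\frac{4t^2}{D\ln 2\,(t^2+d^2)}$, by dominated convergence on the compact interval $[0,D/2]$. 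Integrating gives $\frac{4}{D\ln2}\left(\frac{D}{2}-d\arctan\frac{D}{2d}\right)$. Comparing this with the stated normalization of $\Lambda_3$ fixes the constant in front.

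\textbf{Step 3 (assemble and the main obstacle).} Substituting into Theorem~\ref{Theorem:downlink-OMA-ER} gives $\mathbb{E}(R_m^{D,OMA})=\frac{1}{M}(\Lambda_1+\text{const})+O(1/\rho_D)$.

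The hard step is reconciling this additive form with the statement's $\frac{1}{M}\log_2(\Lambda_1+\Lambda_3)$, including its prefactor $\frac{D}{4\ln2}$. Read literally, a $\log_2$ applied to $\Lambda_1$, which is itself already a logarithm, does not agree with the additive expansion. I would first try to recover the stated form through an alternative route: pull the constant inside the logarithm via $\Lambda_1+c=\log_2\left(2^{c}\left(1+\frac{\eta\rho_D}{d^2+D^2/4}\right)\right)$, and check whether the statement's grouping corresponds to this. If no such rewriting reproduces both the outer $\log_2$ and the factor $\frac{D}{4}$, the statement as worded cannot be established by this argument. What the argument does prove rigorously is the additive asymptote $\frac{1}{M}\left(\Lambda_1+\frac{4}{D\ln2}\left(\frac{D}{2}-d\arctan\frac{D}{2d}\right)\right)$. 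I would check this numerically against Theorem~\ref{Theorem:downlink-OMA-ER} at large $\rho_D$ before deciding which form to report.
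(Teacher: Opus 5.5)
Your argument is essentially the paper's own. The paper also keeps $\Lambda_1$ and uses the Taylor expansion to get $x\arctan\left(\frac{D}{2x}\right)\to\frac{D}{2}$ with $x=\sqrt{d^2+\eta\rho_D}$; carried through $\Lambda_2$, this gives exactly your additive asymptote $\frac{1}{M}\left(\Lambda_1+\frac{4}{D\ln 2}\left(\frac{D}{2}-d\arctan\frac{D}{2d}\right)\right)$.

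The mismatch you flag is a typo in the statement, not a gap in your proof, and you can settle it without a numerical check. The stated $\Lambda_3$ has units of length squared, so its prefactor must be $\frac{4}{D\ln 2}$. An outer $\log_2$ would also contradict the paper's own high-SNR slope of $\frac{1}{M}$.
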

\begin{proof}
For $\Lambda_1$, as $\rho_D \to \infty$, we have $\Lambda_1 \to \infty$. As for $\Lambda_2$, it suffices to analyze the term $\sqrt{d^2+\eta \rho_D} \arctan\left( \frac{D}{2\sqrt{d^2+\eta \rho_D}} \right)$ in the high-SNR regime. To simplify the analysis, let $x = \sqrt{d^2+\eta \rho_D}$, and apply the Taylor expansion at $x \to \infty$ to obtain its approximation as $\underset{x \to \infty}{\lim}x \arctan\left( \frac{D}{2x} \right)=\frac{D}{2}$.
The proof is complete.
\end{proof}

Therefore, the following corollary for the high-SNR slope of $\mathrm{U}_m$ can be obtained.
\begin{corollary}
In the downlink of the OMA system, the high-SNR slope of $\mathrm{U}_m$ is given by $\mathcal{S}_{d}^{D,\mathrm{OMA}} = \frac{1}{M}$. 
\end{corollary}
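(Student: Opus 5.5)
The plan is to apply the standard definition of the high-SNR slope,
\begin{equation*}
\mathcal{S}_{d}^{D,\mathrm{OMA}}=\lim_{\rho_D\to\infty}\frac{\mathbb{E}\left(R_m^{D,OMA}\right)}{\log_2 \rho_D},
\end{equation*}
to the exact expression of Theorem~\ref{Theorem:downlink-OMA-ER}, $\mathbb{E}(R_m^{D,OMA})=\frac{1}{M}(\Lambda_1+\Lambda_2)$. The argument splits into two parts: showing that $\Lambda_1$ grows like $\log_2\rho_D$, and showing that $\Lambda_2$ stays bounded. We work with the exact expression rather than the approximation of Proposition~\ref{Proposition:downlink-OMA-ER-highSNR}, because that approximation is stated somewhat loosely.

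\textbf{The term $\Lambda_1$.} Write
\begin{equation*}
\Lambda_1=\log_2\rho_D+\log_2\left(\frac{1}{\rho_D}+\frac{\eta}{d^2+\frac{D^2}{4}}\right).
\end{equation*}
The second logarithm converges to the finite constant $\log_2\frac{\eta}{d^2+D^2/4}$ as $\rho_D\to\infty$. Hence $\Lambda_1/\log_2\rho_D\to1$.

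\textbf{The term $\Lambda_2$.} Set $x=\sqrt{d^2+\eta\rho_D}$. Since $\arctan u\le u$, we have $x\arctan\left(\frac{D}{2x}\right)\le \frac{D}{2}$, and this quantity tends to $\frac{D}{2}$, as already noted in the proof of Proposition~\ref{Proposition:downlink-OMA-ER-highSNR}. Therefore
\begin{equation*}
\Lambda_2\to\frac{4}{D\ln2}\left(\frac{D}{2}-d\arctan\left(\frac{D}{2d}\right)\right)<\infty,
\end{equation*}
so $\Lambda_2/\log_2\rho_D\to0$.

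Combining the two parts gives the limit $\frac{1}{M}(1+0)=\frac{1}{M}$.

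The only point requiring care is confirming that $\Lambda_2$ is bounded rather than logarithmically growing, since it contains $\sqrt{\eta\rho_D}$. This is settled by the elementary bound $x\arctan(D/(2x))\le D/2$. Equivalently, one can argue directly from the integral representation: because $Z_1\in[d^2,d^2+D^2/4]$ almost surely, the rate is sandwiched as
\begin{equation*}
\frac{1}{M}\log_2\left(1+\frac{\eta\rho_D}{d^2+D^2/4}\right)\le R_m^{D,OMA}\le\frac{1}{M}\log_2\left(1+\frac{\eta\rho_D}{d^2}\right).
\end{equation*}
Both bounds have slope $\frac{1}{M}$, which gives the result robustly.
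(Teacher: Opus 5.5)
Your proof is correct and is essentially the paper's argument. The paper reads the slope off the high-SNR form in Proposition~\ref{Proposition:downlink-OMA-ER-highSNR}, namely $\Lambda_1$ plus the constant limit of $\Lambda_2$ obtained from $x\arctan\left(\frac{D}{2x}\right)\to\frac{D}{2}$; that is your decomposition, but you take the limit of $\mathbb{E}(R_m^{D,OMA})/\log_2\rho_D$ directly from the exact expression of Theorem~\ref{Theorem:downlink-OMA-ER}, which avoids the misprints in Proposition~\ref{Proposition:downlink-OMA-ER-highSNR} (the extra $\log_2$ around $\Lambda_1+\Lambda_3$ and the coefficient of $\Lambda_3$). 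Your sandwich argument based on $Z_1\in\left[d^2,d^2+\frac{D^2}{4}\right]$ is a useful confirmation of the value $\frac{1}{M}$ that needs no closed form at all.
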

\begin{proof}
We have $\mathcal{S}_{m}^{D,OMA}=\underset{\rho_D\to \infty }{\mathrm{lim} }  \frac{\mathbb{E}\left(R_m^{D,OMA} \right)^{\infty}}{d \ \log_2(\rho_D) }  = \frac{1}{M} $. This completes the proof.
\end{proof}

\subsection{NOMA}
In the NOMA scenario, we first characterize the order statistics of channel gains and subsequently derive the closed-form expressions for the OP and the ER.
\subsubsection{Order Statistics of Channel Gains}
Based on \eqref{Eq:downlink-OMA-CDF} and \eqref{Eq:downlink-OMA-PDF}, the PDF and CDF of $Y=|h_m|^2$, representing the $m$-th channel gain, can be given by
\begin{equation}\label{Eq:downlink-NOMA-PDF}
\begin{split}
f_{Y}(y) = 
\begin{cases} 
0, & y < \frac{\eta }{d^2 + \frac{D^2}{4} } , \\ 
\frac{\eta }{ D y^2 \sqrt[]{\frac{\eta }{y} - d^2 } } , & \frac{\eta }{d^2 + \frac{D^2}{4} } \leq y < \frac{\eta}{d^2} 
, \\ 
0, & \frac{\eta}{d^2} \leq y, 
\end{cases}
\end{split}
\end{equation}
and
\begin{equation}\label{Eq:downlink-NOMA-CDF}
\begin{split}
F_{Y}(y) = 
\begin{cases} 
0, & y < \frac{\eta }{d^2 + \frac{D^2}{4} } , \\ 
1-\frac{2}{D} \ \sqrt[]{\frac{\eta}{y} - d^2 }  , & \frac{\eta }{d^2 + \frac{D^2}{4} } \leq y < \frac{\eta}{d^2} 
, \\ 
1, & \frac{\eta}{d^2} \leq y .
\end{cases}
\end{split}
\end{equation}

After sorting the users based on their channel conditions and utilizing order statistics, the PDF and CDF of $\mathrm{U}_m$ can be given by $\displaystyle f_{Y_m}(y)={\frac {M!}{(m-1)!(M-m)!}}f_{Y}(y)F_{Y}^{m-1}(y)(1-F_{Y}(y))^{M-m}$ and ${\displaystyle F_{Y_m}(y)=\sum _{j=m}^{M}{\binom {M}{j}}F_{Y}^{j}(y)(1-F_{Y}(y))^{M-j}}$, respectively \cite{casella2002statistical,extract:04-NOMA}.

\subsubsection{Outage Probability}
In the downlink of NOMA, the signals of users with weaker channel conditions are first decoded via SIC, followed by the decoding of users with stronger channels. Therefore, for $\mathrm{U}_m$, the signals of all users with weaker channels must be successfully decoded and removed before its own signal can be correctly decoded.
Hence, the OP of $\mathrm{U}_m$ is given by $\mathbb{P}_m^{D,NOMA}=1-\mathrm{P_r}\left(R_{m,1}^{D,NOMA}\ge \tilde{R}_1, \cdots, R_{m,m}^{D,NOMA}\ge \tilde{R}_m \right )$, where $R_{m,k}^{D,NOMA}$ $(k\le m)$ is the data rate for $\textrm{U}_m$ to decode $\textrm{U}_k$'s signals, $\tilde{R}_k$ is the target rate of $\textrm{U}_k$. 
It is worth noting that the condition $\tilde{R}_k < R_{m,k}^{D,\mathrm{NOMA},\infty}$ always holds, where $R_{m,k}^{D,\mathrm{NOMA},\infty} = \log_2 \left( 1+ \frac{\alpha_k}{\sum_{j=k+1}^{M}\alpha_j}\right)$ is the asymptotic rate in the high-SNR regime. 
Then, the following theorem for the OP of $\textrm{U}_m$ can be obtained.
\begin{theorem}\label{Theorem:downlink-NOMA-OP}
The OP of $\textrm{U}_m$ is given by
\begin{equation}\label{Eq:downlink-NOMA-OP}
\begin{split}
\mathbb{P}_m^{D,NOMA} = 
\begin{cases}
0, & \lambda_m< \frac{\eta }{d^2 + \frac{D^2}{4} }, \\
F_{Y_m}(\lambda_m) , & \frac{\eta }{d^2 + \frac{D^2}{4} } \le \lambda_m < \frac{\eta}{d^2}, \\
1, & \frac{\eta}{d^2} \le \lambda_m,
\end{cases}
\end{split}
\end{equation}
where $\lambda_m=\max {\{\epsilon_1,\cdots, \epsilon_m\}}$, $\epsilon_m=\frac{M }{\rho_D(\frac{\alpha_m}{2^{\tilde{R}_m}-1} -  {\textstyle \sum_{j=m+1}^{M} \alpha_j}   )}$.

\begin{proof}
Based on \eqref{Eq:systemmodel-downlink-NOMA-rate}, the OP of $\mathrm{U}_m$ decoding the signal of $\mathrm{U}_k$ is given by 
$
\mathrm{P_r}\left (R_{m,k}^{D,NOMA} < \tilde{R}_k \right )
=1-\mathrm{P_r}\left ( Y \ge \epsilon_k \right )
$.
According to the principle of successive interference cancellation (SIC), $\mathrm{U}_m$ first decodes $\mathrm{U}_k$’s signal and then decodes its own signal.
Therefore, $\mathrm{U}_k$ must meet the threshold $\tilde{R}_k$ for successful decoding. Then, the OP of $\mathrm{U}_m$ is given by 
\begin{equation}\label{}
\begin{split}
\mathbb{P}_m^{D,NOMA}
=&\Pr \left ( Y < \max {\{\epsilon_1,\cdots, \epsilon_m\}} \right )=F_{Y_m}(\lambda_m).
\end{split}
\end{equation}
This completes the proof.
\end{proof}
\end{theorem}

Therefore, the following proposition for the high-SNR approximation of OP can be obtained.
\begin{proposition}\label{Proposition:downlink-NOMA-OP-highSNR}
In the high-SNR regime, $\mathbb{P}_m^{D,NOMA}$ is given by $\mathbb{P}_m^{D,NOMA} = 0$, where $\mathbb{P}_m^{D,NOMA}$ becomes zero for $\rho_D \ge \frac{d^2+\frac{D^2}{4} }{\eta} \max \left \{ \gamma_1,\dots, \gamma_m  \right \} $, where $\gamma_m=\frac{M}{\frac{\alpha_m}{2^{\tilde{R}_m}-1} -  {\textstyle \sum_{j=m+1}^{M} \alpha_j}}$.

\begin{proof}
    When $\rho_D \rightarrow \infty$, we have $\lambda_m \rightarrow 0$. Therefore, $F_{Y_m}(\lambda_m)=0$. This completes the proof.
\end{proof}
\end{proposition}

\subsubsection{Ergodic Rate}
Based on \eqref{Eq:systemmodel-downlink-NOMA-rate}, the following theorem for the ER of $\textrm{U}_m$ can be obtained.
\begin{theorem}\label{Theorem:downlink-NOMA-ER}
Based on Jensen's inequality, the upper and lower bounds of the ER of $\textrm{U}_m$ can be derived as
\begin{equation}\label{Eq:downlink-NOMA-ER-upper}
\begin{split}
\mathbb{E}\left( R_{m,m}^{D,NOMA}\right)^{u}=\log_2\left ( 1+   \Lambda_3 \Gamma_2  \right ),
\end{split}
\end{equation}
and
\begin{equation}\label{Eq:downlink-NOMA-ER-lower}
\begin{split}
\mathbb{E}\left(R_{m,m}^{D,NOMA}\right)^{l}=\log_2\left ( 1+  \frac{\rho_D \alpha _m}{\rho_D \sum_{ j=m+1 }^{M}\alpha_j + M\Lambda_4 \Gamma_3}    \right ),
\end{split}
\end{equation}
respectively, where $\Lambda_3=\eta  \rho_D \alpha_m\begin{pmatrix}M-1\\m-1\end{pmatrix}  \left ( \frac{2}{D}  \right )^{M-m+1}$, $\Lambda_4=\frac{M}{\eta}\begin{pmatrix}M-1\\m-1\end{pmatrix}\left ( \frac{2}{D}  \right )^{M-m+1}$, $\Gamma_2=\sum_{k=0}^{m-1} \begin{pmatrix}m-1\\k\end{pmatrix}\left (- \frac{2}{D}  \right )^{k}\xi_k$, $\Gamma_3=\sum_{k=0}^{m-1}\begin{pmatrix}m-1\\k\end{pmatrix} \left ( -\frac{2}{D}  \right )^k \left ( \frac{\left ( \frac{D}{2}  \right )^{M-m+k+3} }{M-m+k+3}
+\frac{\left ( \frac{D}{2}  \right )^{M-m+k+1} d^2 }{M-m+k+1}   \right ) $. The value of $\xi_k$ can be found in the proof.

\begin{proof}\label{proof-NOMA-downlink-ER}
See Appendix \ref{Appen-theorem-1}.
\end{proof}
\end{theorem}

Based on \eqref{Theorem:downlink-NOMA-ER}, the following proposition for the asymptotic ER of the users can be obtained.
\begin{proposition}
In the high-SNR regime, the upper and lower bounds of the ER for $\mathrm{U}_M$ can be approximated by
\begin{equation}\label{NOMA-high-SNR-upper}
\begin{split}
\mathbb{E}\left( R_{M,M}^{D,NOMA} \right)^{u,\infty}=\log_2\left ( 1+   \Lambda_3 \Gamma_2  \right ),
\end{split}
\end{equation}
and
\begin{equation}\label{NOMA-high-SNR-lower}
\begin{split}
\mathbb{E}\left(R_{M,M}^{D,NOMA} \right)^{l,\infty}=\log_2\left ( 1+  \frac{\rho_D \alpha _m}{M\Lambda_4 \Gamma_3}    \right ),
\end{split}
\end{equation}
respectively, where $m=M$. 
Furthermore, the upper and lower bounds of the ER for $\mathrm{U}_m$ $(m \ne M)$ can be approximated as
\begin{equation}
\begin{split}
\mathbb{E}\left( R_{m,m}^{D,NOMA} \right)^{u,\infty}=\log_2\left ( 1+  \frac{ \alpha _m \Lambda_5}{ \sum_{ j=m+1 }^{M}\alpha_j } \right),
\end{split}
\end{equation}
and
\begin{equation}
\begin{split}
\mathbb{E}\left( R_{m,m}^{D,NOMA} \right)^{l,\infty}=\log_2\left ( 1+  \frac{ \alpha _m}{ \sum_{ j=m+1 }^{M}\alpha_j }  \right ),
\end{split}
\end{equation}
where $\Lambda_5=M\begin{pmatrix}M-1\\m-1\end{pmatrix}  \sum_{k=0}^{m-1} \begin{pmatrix}m-1\\k\end{pmatrix}\frac{(-1)^{k}}{M-m+k+1}$.
\end{proposition}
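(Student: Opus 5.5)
The plan is to start from the two Jensen bounds of Theorem~\ref{Theorem:downlink-NOMA-ER} and take $\rho_D\to\infty$ in each, treating $m=M$ and $m\neq M$ separately. The split is natural because the interference term $\rho_D\sum_{j=m+1}^{M}\alpha_j$ in \eqref{Eq:systemmodel-downlink-NOMA-rate} is absent only for $m=M$.

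\emph{Case $m=M$.} Here the SINR reduces to $Y_M\rho_D\alpha_M/M$, which is linear in $Y_M$.
\begin{itemize}
\item The upper bound is $\log_2(1+\mathbb{E}[\cdot])$. Its argument $\Lambda_3\Gamma_2$ is exactly $\mathbb{E}[Y_M]\rho_D\alpha_M/M$. Here $\Gamma_2$ comes from integrating $y f_{Y_M}(y)$ with the order-statistic density and the substitution $t=\sqrt{\eta/y-d^2}$, using the binomial expansion of $F_Y^{m-1}=(1-\tfrac{2}{D}t)^{m-1}$. This quantity does not depend on $\rho_D$ except through the prefactor $\rho_D$ in $\Lambda_3$, so the expression is unchanged and already linear in $\rho_D$ inside the log. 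The first claim, \eqref{NOMA-high-SNR-upper}, therefore just restates \eqref{Eq:downlink-NOMA-ER-upper} with $m=M$.
\item For the lower bound, Jensen is applied through $\mathbb{E}[1/Y_M]$, with $M\Lambda_4\Gamma_3=M\,\mathbb{E}[1/Y_M]$. Setting the empty sum $\sum_{j=M+1}^{M}\alpha_j=0$ in \eqref{Eq:downlink-NOMA-ER-lower} gives \eqref{NOMA-high-SNR-lower} directly.
\end{itemize}
I will remark that both bounds then grow like $\log_2\rho_D$, which gives slope one for $\mathrm{U}_M$.

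\emph{Case $m<M$, lower bound.} Divide the numerator and denominator of \eqref{Eq:downlink-NOMA-ER-lower} by $\rho_D$. The term $M\Lambda_4\Gamma_3/\rho_D$ vanishes because $\Lambda_4\Gamma_3$ is a finite constant, namely $\mathbb{E}[1/Y_m]$ on the bounded support $[d^2,d^2+D^2/4]$ of $1/Y$ scaled by $\eta$. This leaves $\log_2\bigl(1+\alpha_m/\sum_{j>m}\alpha_j\bigr)$.

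\emph{Case $m<M$, upper bound.} Write the per-realization SINR as
\[
g_{\rho_D}(Y_m)=\frac{\rho_D\alpha_m Y_m}{\rho_D Y_m\sum_{j>m}\alpha_j+M}.
\]
The upper bound is $\log_2\bigl(1+\mathbb{E}[g_{\rho_D}(Y_m)]\bigr)$, computed in the appendix as an integral against $f_{Y_m}$. Pointwise $g_{\rho_D}(y)\to\alpha_m/\sum_{j>m}\alpha_j$ for every $y$ in the support $[\eta/(d^2+D^2/4),\eta/d^2]$. The convergence is monotone and bounded, so the limit can be passed inside the integral. The limit equals $\frac{\alpha_m}{\sum_{j>m}\alpha_j}\int f_{Y_m}$.

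I will keep the unevaluated integral in the form the appendix produces. After the substitution $t=\sqrt{\eta/y-d^2}$ and the binomial expansion, this integral is exactly
\[
\Lambda_5=M\binom{M-1}{m-1}\sum_{k=0}^{m-1}\binom{m-1}{k}\frac{(-1)^k}{M-m+k+1}.
\]
This yields the stated form $\log_2\bigl(1+\alpha_m\Lambda_5/\sum_{j>m}\alpha_j\bigr)$.

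I will also verify the consistency identity $\Lambda_5=1$. The identity $\sum_k\binom{m-1}{k}(-1)^k/(n+k)=B(n,m)$ with $n=M-m+1$ gives $\Lambda_5=M\binom{M-1}{m-1}\frac{(M-m)!(m-1)!}{M!}=1$. So the upper and lower bounds coincide asymptotically, and the ER of $\mathrm{U}_m$ saturates.

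The main obstacle is matching the appendix's parametrization exactly. I need to track how the $(2/D)^{M-m+1}$ factors and the $t$-powers combine so that, once the $\rho_D$-dependence cancels, the remaining sum is precisely $\Lambda_5$. I also need to justify the limit interchange, for which the bounded support and dominated convergence should suffice.
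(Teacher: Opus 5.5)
Your proposal is correct and follows essentially the paper's route. For $m=M$, the Jensen bounds of Theorem~\ref{Theorem:downlink-NOMA-ER} are simply restated with the empty interference sum; for $m\neq M$, the limit $\rho_D\to\infty$ is passed inside the order-statistic integral (the paper does this in the $t$-variable by dropping $(t^2+d^2)/\rho_D$), which yields $\Lambda_5$, mistyped as $\Lambda_4$ in the paper's proof. Your monotone-convergence justification, your explicit treatment of the $m\neq M$ lower bound, and your Beta-function check that $\Lambda_5=1$ (so the two asymptotic bounds coincide) are useful additions that the paper omits.
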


\begin{proof}
For the last user to decode, i.e., $\mathrm{U}_M $, all other users' signals have already been decoded and subtracted. 
As a result, only its own signal remains. 
Therefore, for the upper bound of the ER, when \( m = M \), we have $\underset{\rho_D \to \infty}{\mathrm{lim}} R_{M,M}^{D,NOMA,u}=\log_2\left ( 1+   \Lambda_3 \Gamma_2  \right )$. 
Similarly, for the lower bound of the ER, we obtain $\underset{\rho_D \to \infty}{\mathrm{lim}} R_{M,M}^{D,NOMA,l}=\log_2\left ( 1+  \frac{\rho_D \alpha _m}{M\Lambda_4 \Gamma_3} \right )$.

According to Appendix \ref{Appen-theorem-1}, the upper bound of ER for $\mathrm{U}_m$ is given by $\underset{\rho_D \to \infty}{\mathrm{lim}}R_m^{D,NOMA,u}=\underset{\rho_D \to \infty}{\mathrm{lim}}\log_2\left ( 1+\mathbb{E}\left (f_2(y)\right ) \right )$. 
As $ \rho_D \to \infty $, the expectation $ \mathbb{E}\left(f_2(y)\right) $ is given by
$
\underset{\rho_D \to \infty}{\mathrm{lim}}\mathbb{E}\left (f_2(y)\right )= \underset{\rho_D \to \infty}{\mathrm{lim}}\eta \alpha_m\begin{pmatrix}M-1\\m-1\end{pmatrix}  \left ( \frac{2}{D}  \right )^{M-m+1} \\
\times\int_{0}^{\frac{D}{2} } \frac{t^{M-m}\left ( 1-\frac{2}{D}t  \right )^{m-1} }
{ \frac{t^2 + d^2}{\rho_D} + \frac{\eta   {\textstyle \sum_{j=m+1}^{M}}\alpha_j }{M} } dt.
$
where $\frac{t^2 + d^2}{\rho_D} \to 0$, Therefore, we have $\underset{\rho_D \to \infty}{\mathrm{lim}}\mathbb{E}\left (f_2(y)\right )=  \frac{ \alpha _m \Lambda_4}{ \sum_{ j=m+1 }^{M}\alpha_j }$. 
\end{proof}

Therefore, the following corollary for the high-SNR slope of users' ER can be obtained.
\begin{corollary}
In the downlink of the NOMA system, the high-SNR slope of the ER for $\mathrm{U}_M$ and $\mathrm{U}_m$ $(m \ne M)$ is given by $\mathcal{S}_{M}^{D,NOMA} = 1$ and $\mathcal{S}_{m}^{D,NOMA} = 0$, respectively. 
\end{corollary}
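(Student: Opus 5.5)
The corollary follows by dividing the high-SNR approximations of the preceding proposition by $\log_2\rho_D$ and letting $\rho_D\to\infty$. This is the same definition used for the OMA corollary: $\mathcal{S}_m^{D,NOMA}=\lim_{\rho_D\to\infty}\mathbb{E}(R_{m,m}^{D,NOMA})^{\infty}/\log_2(\rho_D)$. For each user I will treat the upper- and lower-bound approximations separately. The ergodic rate is sandwiched between them, so it suffices that both bounds give the same slope.

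\textbf{User $\mathrm{U}_M$.} In the proposition's expressions for $m=M$, the argument of the logarithm grows linearly in $\rho_D$.
\begin{itemize}
\item In the upper bound $\log_2(1+\Lambda_3\Gamma_2)$, $\Lambda_3=\eta\rho_D\alpha_M\binom{M-1}{M-1}(2/D)$ is proportional to $\rho_D$.
\item I must check that $\Gamma_2$ tends to a finite positive constant. By Appendix~\ref{Appen-theorem-1}, $\xi_k$ comes from integrating $t^{k}/(t^2+d^2)$-type terms over $[0,D/2]$ once $\sum_{j>M}\alpha_j=0$, so it is independent of $\rho_D$. Positivity follows because $\Gamma_2$ is a constant multiple of $\mathbb{E}(Y_M)$ after the substitution.
\item In the lower bound, $\rho_D\alpha_M/(M\Lambda_4\Gamma_3)$ is $\rho_D$ times a positive $\rho_D$-free constant, because $\Lambda_4$ and $\Gamma_3$ depend only on $M,m,D,d,\eta$.
\end{itemize}
Hence both bounds equal $\log_2\rho_D+O(1)$, the ratio tends to $1$, and $\mathcal{S}_M^{D,NOMA}=1$.

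\textbf{User $\mathrm{U}_m$ with $m\neq M$.} Here the proposition gives bounds that are constants independent of $\rho_D$: $\log_2(1+\alpha_m\Lambda_5/\sum_{j>m}\alpha_j)$ and $\log_2(1+\alpha_m/\sum_{j>m}\alpha_j)$. These are finite because $\sum_{j=m+1}^M\alpha_j>0$. Dividing a bounded quantity by $\log_2\rho_D\to\infty$ gives $0$, so $\mathcal{S}_m^{D,NOMA}=0$.

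The expected obstacle is justifying that the approximations can be used inside the limit. I will argue directly from the exact bounds in Theorem~\ref{Theorem:downlink-NOMA-ER}.
\begin{itemize}
\item For $m\neq M$, the exact instantaneous SINR is at most $\alpha_m/\sum_{j>m}\alpha_j$ for every $\rho_D$, since the noise term only decreases the SINR. The rate is therefore uniformly bounded, and the slope is $0$ without any asymptotics.
\item For $m=M$, the exact rate satisfies $\log_2(1+|h_M|^2\rho_D\alpha_M/M)$ with $|h_M|^2\in[\eta/(d^2+D^2/4),\,\eta/d^2]$ deterministically, using the support in \eqref{Eq:downlink-NOMA-PDF}. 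This sandwiches the rate between $\log_2\rho_D\pm O(1)$, giving slope $1$ directly.
\end{itemize}
I would present this bounded-support sandwich as the rigorous core of the proof, with the proposition's closed forms as confirmation.
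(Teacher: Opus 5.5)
Your proposal is correct, but its rigorous core takes a different route from the paper.

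The paper's proof differentiates the closed-form lower-bound asymptote $\log_2\bigl(1+\rho_D\alpha_M/(M\Lambda_4\Gamma_3)\bigr)$ with respect to $\log_2\rho_D$ to get $1$, and for $m\neq M$ it notes that the rate saturates. This is quick because the expression is already available and easy to differentiate. Strictly, though, it gives the slope of a Jensen bound and leaves implicit that the true ER has the same slope.

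Your first part reproduces that reasoning for both bounds, and correctly so: for $m=M$ one has $\zeta=d$, so $\Gamma_2$, $\Lambda_4$, $\Gamma_3$ are $\rho_D$-free and positive. Your second part works with the exact rate instead. The deterministic support $|h_M|^2\in[\eta/(d^2+D^2/4),\,\eta/d^2]$ pins $R_{M,M}^{D,NOMA}$ to $\log_2\rho_D+O(1)$ for every realization. For $m\neq M$, the interference ceiling $\alpha_m/\sum_{j>m}\alpha_j$ together with $R_{m,m}^{D,NOMA}\ge 0$ does the job. So your conclusion holds for the ER itself and needs none of the closed forms of Theorem~\ref{Theorem:downlink-NOMA-ER}.

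Two small remarks. First, for $m\neq M$ the proposition's expressions are high-SNR limits, not finite-SNR lower bounds, so ``sandwiched between them'' is literally true only for $m=M$; the ceiling argument is what actually carries the other users. Second, you use the ratio $\mathbb{E}(\cdot)/\log_2\rho_D$, while the paper writes $d\,\mathbb{E}(\cdot)/d\log_2\rho_D$. An $O(1)$ sandwich alone does not control a derivative, but dominated convergence gives it at once: $\frac{d}{d\log_2\rho_D}\mathbb{E}\bigl[\log_2(1+\rho_D\alpha_M|h_M|^2/M)\bigr]=\mathbb{E}\bigl[\frac{\rho_D\alpha_M|h_M|^2/M}{1+\rho_D\alpha_M|h_M|^2/M}\bigr]\to 1$, and the analogous derivative tends to $0$ for $m\neq M$.
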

\begin{proof}
Based on \eqref{NOMA-high-SNR-upper} and \eqref{NOMA-high-SNR-lower}, the high-SNR slope of $\mathrm{U}_M$ is obtained as $\mathcal{S}_{M}^{D,NOMA}=\underset{\rho_D\to \infty }{\mathrm{lim} }  \frac{d \ \mathbb{E}\left( R_{M,M}^{D,NOMA} \right)^{l,\infty}}{d \ \log_2(\rho_D) }  = 1 $. Since in the high SNR regime, the ER at user $ \mathrm{U}_m $ $(m \ne M)$ tends to a constant and no longer varies, we have $\mathcal{S}_{m}^{D,NOMA} =\underset{\rho_D\to \infty }{\mathrm{lim} }  \frac{d \ \mathbb{E}\left( R_{m,m}^{D,NOMA} \right)^{l,\infty}}{d \ \log_2(\rho_D) }  = 0 $. This completes the proof.
\end{proof}

\subsection{A Discussion on Outage Performance Comparison Between OMA and NOMA}\label{SubSection:downlink-NOMA-OP-compare}
In the downlink scenario, it is worth noting that the outage performance of OMA and NOMA is not always identical under the same target rate. 
Therefore, we compare the reliability of the two schemes by the minimum SNR threshold at which the OP vanishes.
Here, we define the minimum SNR required for $\mathrm{U}_m$ to achieve zero OP in OMA and NOMA systems as $\rho_m^{D,OMA}=\frac{(d^2+\frac{D^2}{4})(2^{M\tilde R_m} -1)}{\eta}$ and $\rho_m^{D,NOMA}=\frac{d^2+\frac{D^2}{4} }{\eta} \max \left \{ \gamma_1,\dots, \gamma_m \right \}$, respectively. 
Therefore, the discussion can be divided into the following three cases.

\textit{Case 1}: $\rho_m^{D,OMA} < \underset{\substack{m=1,2,\dots,M}}{\min}\rho_m^{D,NOMA}$, which can be expressed as follows
\begin{equation}\label{}
\begin{split}
2^{M\tilde R_m} -1 < \gamma_1,
\end{split}
\end{equation}
where $\gamma_1=\frac{M}{\frac{\alpha_1}{2^{\tilde{R}_1}-1} -  1+\alpha_1}$, $\tilde{R}_1=\tilde{R}_m$. In this case, under the same target rate, the SNR required for users in the OMA system to achieve an OP of zero is lower than that for any user in the NOMA system. 

\textit{Case 2:} $\rho_m^{D,OMA} \ge \rho_m^{D,NOMA}$, which can be expressed as follows
\begin{equation}\label{Eq:downlink-NOMA-OP-remark-case2}
\begin{split}
2^{M\tilde R_m} -1 \ge \max \left \{ \gamma_1,\dots, \gamma_m  \right \}.
\end{split}
\end{equation}
Conversely, in this case, the SNR required for users in the OMA system to achieve an OP of zero may be higher than that for some users in the NOMA system.

\textit{Case 3}: $\rho_m^{D,OMA} \ge \underset{\substack{m=1,2,\dots,M}}{\max}\rho_m^{D,NOMA}$. 
In this case, the OP performance of OMA would be inferior to that of all NOMA users. However, such a scenario does not occur in practice.

\begin{proof}
For case 1, $\underset{\substack{m=1,2,\dots,M}}{\min}\rho_m^{D,NOMA}$ can be given by $\min \left \{ \gamma_1,\dots, \max \left \{ \gamma_1,\dots, \gamma_M \right \} \right \}$. Since we have $\underset{\substack{m=1,2,\dots,M}}{\min}\rho_m^{D,NOMA}=\gamma_1$, $\forall \gamma_1,\dots, \gamma_M$.

For case 2, we can obtain \eqref{Eq:downlink-NOMA-OP-remark-case2} by simplification. 

For case 3, in order to ensure that no user in different systems experiences a perpetual outage, we require 
$\tilde R_m \le R_m^{\infty}$, where $R_m^{\infty}=\log_2 \left (1+ \frac{\alpha_m}{\sum_{j=m+1}^{M}\alpha_j} \right )$ denotes the asymptotic rate in the high-SNR regime. Without loss of generality, let $\tilde R_m = R_{\min}^{\infty}=\min\{ R_1^{\infty}, R_2^{\infty}, \dots ,R_{M-1}^{\infty}\}$. Then we have
\begin{equation}\label{}
\begin{split}
2^{MR_{\min}^{\infty}} -1 \ge \underset{m=1,2,\dots,M}{\max} \frac{M}{\frac{\alpha_m}{2^{R_{\min}^{\infty}}-1} -\sum_{j=m+1}^{M}\alpha_j }.
\end{split}
\end{equation}
Suppose there exists a user $\mathrm{U}_k$ such that $R_{\min,k}^{\infty}= \log_2 \left (1+ \frac{\alpha_k}{\sum_{j=k+1}^{M}\alpha_j} \right )$ is the minimum target rate. 
In this case, $\rho_k^{D,NOMA} \to \infty$. Consequently, $\underset{m=1,2,\dots,M}{\max}\rho_m^{D,NOMA} = \infty$, and the inequality does not hold. 
This completes the proof.
\end{proof}

\section{Performance Analysis of Uplink Transmission}
\subsection{OMA}
In the uplink OMA system, the performance analysis of users is similar to that of the downlink, owing to the inherent symmetry in the communication mechanisms. 
By replacing the transmit power of the BS in the downlink with the transmit power of users in the uplink, the corresponding OP and ER can be derived. 
The detailed results will be provided in subsequent sections.

\subsection{NOMA}
In the uplink, the performance analysis of the NOMA system is similar to that of the downlink.
The users are ordered according to their channel conditions in ascending order. 
Without loss of generality, we assume that $\mathrm{U}_m$ is the $m$th user, and $|h_1|^2\leq\cdots\leq|h_M|^2$, where $|h_m|^2 = \frac{\eta}{|\mathbf{\psi}_m^{Pin}-\mathbf{\psi}_m|^2}$.
\subsubsection{Outage Probability}
For \( \mathrm{U}_m \), successful decoding of its signal requires that all preceding users \( (\mathrm{U}_{m+1}, \dots, \mathrm{U}_M) \) are successfully decoded. 
Therefore, according to \eqref{Eq:systemmodel-uplink-NOMA-rate}, the OP of \( \mathrm{U}_m \) is given by
$\mathbb{P}_m^{U,NOMA} = 1 - \Pr \left( R_m^{U,NOMA} \ge \tilde{R}_m, \cdots \right. \\
\left. R_M^{U,NOMA} \ge \tilde{R}_M \right)$. For $ R_m^{U,NOMA} \ge \tilde{R}_m$, we have $|h_m|^2 \ge \left ( \sum_{j=1}^{m-1}|h_j|^2+\frac{1}{ \rho_U}  \right ) \left ( 2^{\tilde{R}_m} -1\right )$. Then, by substituting the above inequality into $\mathbb{P}_m^{U,NOMA}$, the OP of \( \mathrm{U}_m \) in the uplink NOMA system can be given by
\begin{equation}\label{Eq:uplink-NOMA-OP}
\begin{split}
 \mathbb{P}_m^{U,NOMA} =& 1 - \underbrace{\int_{a}^{b} \cdots  \int_{a}^{b}}_{m-1 \ \mathrm{integrals} } 
\underbrace{\int_{\omega_m}^{b} \cdots  \int_{\omega_M}^{b}}_{M-m+1 \ \mathrm{integrals} } 
f_{|h_1|^2|h_2|^2 \cdots |h_M|^2} \\
&(y_1,y_2,\dots, y_M  ) dy_1 dy_2 \cdots dy_M.
\end{split}
\end{equation}
where $f(\cdot)$ denotes the joint probability density function, $a=\frac{\eta}{d^2+\frac{D^2}{4}}$, $b=\frac{\eta}{d^2}$ and $\omega_i$ can be expressed as
\begin{equation}\label{Uplink:NOMA-OP-omga}
\begin{split}
\omega_i=
\begin{cases}
\left ( \sum_{j=1}^{i-1}|h_j|^2+\frac{1}{ \rho_U}  \right ) \left ( 2^{\tilde{R}_i} -1\right ),  &1<i\le M, \\
\frac{2^{\tilde{R}_1} -1}{\rho_U}, &i=1.
\end{cases}
\end{split}
\end{equation}

\paragraph{\textbf{The special case of two users}}
To validate the system performance, we derive the OPs of the two-user case. The OPs for the two users can be respectively expressed as 
\begin{equation}\label{eq:outage_U1_uplink_NOMA}
\begin{split}
\mathbb{P}_1^{U,NOMA}=1-\Pr\left(R_1^{U,NOMA} \ge \tilde{R}_1, R_2^{U,NOMA} \ge \tilde{R}_2 \right) ,
\end{split}
\end{equation}
and
\begin{equation}\label{eq:outage_U2_uplink_NOMA}
\begin{split}
\mathbb{P}_2^{U,NOMA}=\Pr(R_2^{U,NOMA} < \tilde{R}_2).
\end{split}
\end{equation}

Then, the following theorem for the OPs of $\mathrm{U}_1$ and $\mathrm{U}_2$ can be obtained.
\begin{theorem}\label{OP:uplink-NOMA}
According to \eqref{Eq:systemmodel-uplink-NOMA-rate}, \eqref{eq:outage_U1_uplink_NOMA} and \eqref{eq:outage_U2_uplink_NOMA}, the OP of $\mathrm{U}_1$ is given by
\begin{equation}\label{}
\begin{split}
\mathbb{P}_1^{U,NOMA} \approx \left\{\begin{matrix}
  1, & \omega_1 \ge b,\\
  1 - \left (I_1 + I_2 \right), & 0 < \tilde{R}_2 < 1 ,\omega_1 < b,\\
  1 - I_3,  & \tilde{R}_2=1,\omega_1 < b,\\
  1 - \left (I_4 + I_5 \right), & 1 < \tilde{R}_2,\omega_1 < b,\\
\end{matrix}\right.
\end{split}
\end{equation}
where $\omega_1=\frac{2^{\tilde{R}_1} -1}{\rho_U}$, $\omega_2(x_i)=\left ( x_i+\frac{1}{ \rho_U}  \right ) \left ( 2^{\tilde{R}_2} -1\right )$.
The parameters are given by 
$c=\frac{2^{\tilde R_2}-1}{\rho_U (2-2^{\tilde R_2})}$, 
$w_i=\frac{\pi}{n}$, 
$t_i=\mathrm{cos}\left( \frac{2i-1}{2n} \pi\right)$.
The function $f(x_i)$ is given by 
$f\left( x_i \right)=\frac{\sqrt[]{\frac{\eta}{\min(\omega_2(x_i),b)}-d^2 } }{x_i^2\sqrt[]{\frac{\eta}{x_i}-d^2 } }$.
The components $I_1$ through $I_5$ are expressed as:
$I_1 = \frac{4\eta}{D^2} \left ( \frac{1}{\max(\omega_1, c, a)} - \frac{1}{b}  \right )$;
$I_2 \approx  \frac{2\eta \left (  \min(b,c)-\max(\omega_1, a)\right ) }{D^2} \times \sum_{i=1}^{n} w_i \ \sqrt[]{1-t_i^2} f\left ( x_i \right )$, where $x_i=\frac{\min(b,c)-\max(\omega_1, a)}{2}t_i+\frac{\min(b,c)+\max(\omega_1, a)}{2}$;
$I_3 \approx \frac{2\eta \left (  b-\max(\omega_1, a)\right ) }{D^2} \sum_{i=1}^{n} w_i \ \sqrt[]{1-t_i^2} f\left ( x_i \right ) $, where $x_i=\frac{b-\max(\omega_1, a)}{2}t_i+\frac{b+\max(\omega_1, a)}{2}$;
$I_4 = \frac{4\eta}{D^2} \left ( \frac{1}{\max(\omega_1, a)} - \frac{1}{\min(b, c)}  \right )$; 
$I_5 \approx \frac{2\eta \left (  b-\max(\omega_1, c, a)\right ) }{D^2} \sum_{i=1}^{n} w_i \ \sqrt[]{1-t_i^2} f\left ( x_i \right )$, 
where $x_i=\frac{b-\max(\omega_1, c, a)}{2}t_i+\frac{b+\max(\omega_1, c, a)}{2}$. The OP expression for $\mathrm{U}_2$ can be given by
\begin{equation}\label{}
\begin{split}
\mathbb{P}_2^{U,NOMA} \approx 
\begin{cases}
I_6, & 0 < \tilde{R}_2 < 1,\ c>a,  \\
I_7, & \tilde{R}_2=0,  \\
I_8, & 1 < \tilde{R}_2 ,\ c<b, 
\end{cases}
\end{split}
\end{equation}
where the components $I_6$ through $I_8$ are expressed as: $I_6=\frac{4\eta }{D^2} \left (  \frac{1}{a}-\frac{1}{\min(b,c)} \right )  -  \frac{2\eta \left (  \min(b,c)-a\right ) }{D^2} \sum_{i=1}^{n} w_i \ \sqrt[]{1-t_i^2} f\left ( x_i \right )$, 
$x_i=\frac{\min(b,c)-a}{2}t_i+\frac{\min(b,c)+a}{2}$;
$I_7=1 -  \frac{2\eta \left (  b-a\right ) }{D^2} \sum_{i=1}^{n} w_i \ \sqrt[]{1-t_i^2} f\left ( x_i \right ) $, 
$x_i=\frac{b-a}{2}t_i+\frac{b+a}{2}$;
$I_8=\frac{4\eta }{D^2} \left (  \frac{1}{\max(a,c)}-\frac{1}{b} \right ) -  \frac{2\eta \left (  b-\max(a,c)\right ) }{D^2} \sum_{i=1}^{n} w_i \ \sqrt[]{1-t_i^2} f\left ( x_i \right )$, 
$x_i=\frac{b-\max(a,c)}{2}t_i+\frac{b+\max(a,c)}{2}$. 

\begin{proof}
    See Appendix \ref{Appendix:OP-uplink-NOMA}.
\end{proof}
\end{theorem}

Then, the asymptotic OP of the two users in the high-SNR regime can be characterized by the following proposition.
\begin{proposition}\label{Proposition:uplink-NOMA-OP-infty}
In the high-SNR regime, $\mathbb{P}_1^{U,NOMA}$ and $\mathbb{P}_2^{U,NOMA}$ can be expressed as
\begin{equation}\label{}
\begin{split}
\mathbb{P}_1^{U,NOMA,\infty}=\mathbb{P}_2^{U,NOMA,\infty}  = \left\{\begin{matrix}
  0, & \tilde{R}_2 \in (0,1),\\
  \frac{2d^4}{D^2\eta}\frac{\ln \rho_U}{\rho_U}, & \tilde{R}_2=1,\\
  1-I_9, & \tilde{R}_2 \in (1,\infty).
\end{matrix}\right.
\end{split}
\end{equation}
For $\tilde{R}_2 \in (0,1)$, $\mathbb{P}_{\{ 1,2\}}^{U,NOMA}$ becomes zero for $\rho_U \ge \frac{2^{\tilde R_2}-1}{a (2-2^{\tilde R_2})}$. 
For $\tilde{R}_2=1$, $\mathbb{P}_{\{ 1,2\}}^{U,NOMA}$ decays as $\rho_{U}^{-1}$ in the high-SNR regime.
For $\tilde{R}_2 \in (1,\infty)$, $\mathbb{P}_1^{U,NOMA}$ and $\mathbb{P}_2^{U,NOMA} $ converge to a constant value. we have $I_9= \frac{2\eta (b-a ) }{D^2} \sum_{i=1}^{n} w_i \ \sqrt[]{1-t_i^2} f\left ( x_i \right ) $, where $f(x_i)=\frac{\sqrt[]{\frac{\eta}{\min \left ( x_i\left ( 2^{\tilde{R}_2}-1,b\right)\right) }-d^2 } }{x_i^2 \sqrt[]{\frac{\eta}{x_i}-d^2 } }$, $x_i=\frac{b-a}{2}t_i+\frac{b+a}{2}$.
\begin{proof}
    See Appendix \ref{Appendix:OP-uplink-NOMA-infty}.
\end{proof}
\end{proposition}

\subsubsection{Egodic Rate}
Based on \eqref{Eq:systemmodel-uplink-NOMA-rate}, the ER of $\mathrm{U}_m$ is given by
\begin{equation}\label{ER-uplink-NOMA}
\begin{split}
\mathbb{E}\left ( R_m^{U,NOMA} \right ) =\mathbb{E}\left ( \log_2\left(1+\frac{|h_m|^2}{\sum_{j=1}^{m-1}|h_j|^2+\frac{1}{\rho_U} }\right)   \right ) .
\end{split}
\end{equation}
Specifically, when $m=1$, we have $\sum_{j=1}^{m-1}|h_j|^2=0$. Hence, based on the distribution of the channel and the rate expression of the statistically ordered $\mathrm{U}_m$, the ER of $\mathrm{U}_m$ in the uplink NOMA system can be expressed as
\begin{equation}\label{ER-uplink-NOMA-derive}
\begin{split}
\mathbb{E}\left ( R_m^{U,NOMA} \right ) = \underbrace{ \int_{a}^{b} \cdots  \int_{a}^{b}}_{m \ \mathrm{integrals} }
\log_2\left(1+\frac{y_m}{\sum_{j=1}^{m-1}y_j+\frac{1}{\rho_U} }\right) \\
\times f_{|h_1|^2|h_2|^2 \cdots |h_m|^2}
(y_1,y_2,\dots, y_m  ) dy_1 dy_2 \cdots dy_m,
\end{split}
\end{equation}
where $a=\frac{\eta}{d^2+\frac{D^2}{4}}$, $b=\frac{\eta}{d^2}$.

\paragraph{\textbf{The special case of two users}}
For the case of two users, their ERs can be respectively expressed as
\begin{equation}\label{}
\begin{split}
\mathbb{E}\left ( R_1^{U,NOMA} \right ) =\mathbb{E}\left ( \log_2\left(1+|h_1|^2\rho_U \right)   \right ),
\end{split}
\end{equation}
and
\begin{equation}\label{}
\begin{split}
\mathbb{E}\left ( R_2^{U,NOMA} \right ) =\mathbb{E}\left ( \log_2\left(1+\frac{|h_2|^2}{|h_1|^2+\frac{1}{\rho_U} }\right)   \right ) .
\end{split}
\end{equation}

Then, their explicit expressions are further derived and presented in the following theorem.
\begin{theorem}\label{theorem-uplink-NOMA-ER-two-User}
In the uplink NOMA system, the ER of $\mathrm{U}_1$ is given by
\begin{equation}\label{}
\begin{split}
\mathbb{E}\left ( R_1^{U,NOMA} \right ) = -\frac{4\eta}{D^2\ln2}\left( G(b) - G(a)\right),
\end{split}
\end{equation}
where $G(x)=\frac{\ln(1+\rho_U x)}{x}-\rho_U \ln x+\rho_U\ln(1+\rho_U x)$.

On the other hand, the ER of $\mathrm{U}_2$ is given by
\begin{equation}\label{}
\begin{split}
&\mathbb{E}\left ( R_2^{U,NOMA} \right )  \approx\frac{b-a}{\ln2} \sum_{i=1}^{n}w_i \sqrt{1-t_i^2} H_1\left( x _i \right) ,
\end{split}
\end{equation}
where $w_i=\frac{\pi}{n}$, $t_i=\mathrm{cos}\left( \frac{2i-1}{2n} \pi\right)$, $x_i=\frac{b-a}{2}t_i+\frac{b+a}{2}$, $H_1\left( x _i \right)=f_Y(x_i)\Xi_2(x_i)$, $\Xi_2(x_i)=\frac{2}{D} \ln\left ( 1+\frac{x_i }{x_i + \frac{1}{\rho_U } }  \right ) \sqrt[]{\frac{\eta}{x_i}-d^2 }
+ \frac{4}{D}\sqrt[]{\frac{\eta}{x_i + \frac{1}{\rho_U } } + d^2 }\arctan \left ( \frac{\sqrt[]{\frac{\eta}{x_i}-d^2 } }{\sqrt[]{\frac{\eta}{x_i + \frac{1}{\rho_U } } + d^2 }}  \right ) -   \frac{4d}{D} \arctan\left ( \frac{\sqrt[]{\frac{\eta}{x_i}-d^2 }}{d}  \right )$, $f_Y(x_i) = \frac{\eta }{ D x_i^2 \sqrt[]{\frac{\eta }{x_i} - d^2 } }$.
\begin{proof}
    See Appendix \ref{Appen-theorem-uplink-NOMA-ER-two-User}.
\end{proof}
\end{theorem}
Then, the asymptotic ERs of the two users in the high-SNR regime are characterized by the following proposition.
\begin{proposition}
In the high-SNR regime, the approximation of $R_1^{U,NOMA}$ is given by
\begin{equation}\label{ER_uplink_NOMA_lim}
\begin{split}
\mathbb{E}\left ( R_1^{U,NOMA} \right )^{\infty} = \frac{4\eta}{D^2\ln2}\left(\left (  \frac{1}{a} -\frac{1}{b} \right ) \ln \rho_U
+ \Delta_2\right),
\end{split}
\end{equation}
where $\Delta_2 =\frac{\ln a}{a}- \frac{\ln b}{b}+\frac{1}{a}-\frac{1}{b}$.
On the other hand, the approximation of $R_2^{U,NOMA}$ is given by
\begin{equation}\label{}
\begin{split}
&\mathbb{E}\left ( R_2^{U,NOMA} \right ) ^{\infty} \approx\frac{b-a}{\ln2} \sum_{i=1}^{n}w_i \sqrt{1-t_i^2} H_2\left( x _i \right) ,
\end{split}
\end{equation}
where $w_i=\frac{\pi}{n}$, $t_i=\mathrm{cos}\left( \frac{2i-1}{2n} \pi\right)$, $x_i=\frac{b-a}{2}t_i+\frac{b+a}{2}$, $H_2\left( x _i \right)=f_Y(x_i)\Xi_4(x_i)$, $\Xi_3(x_i)=\frac{2\ln2}{D}\sqrt[]{\frac{\eta}{x_i}-d^2 } 
+ \frac{4}{D}\sqrt[]{\frac{\eta}{x_i} + d^2 }\arctan \left ( \frac{\sqrt[]{\frac{\eta}{x_i}-d^2 } }{\sqrt[]{\frac{\eta}{x_i} + d^2 }}  \right ) -   \frac{4d}{D} \arctan\left ( \frac{\sqrt[]{\frac{\eta}{x_i}-d^2 }}{d}  \right )$, $f_Y(x_i) = \frac{\eta }{ D x_i^2 \sqrt[]{\frac{\eta }{x_i} - d^2 } }$.

\begin{proof}
For $\mathrm{U}_1$, the term $G(b)-G(a)$ can be simplified to the following form
$
G(b)-G(a)=\frac{\ln (1+\rho_U b)}{b} - \frac{\ln (1+\rho_U a)}{a} + \rho_U \ln \left ( \frac{1+\frac{1}{\rho_U b} }{1+\frac{1}{\rho_U a}}  \right ) .
$
As $\rho_U \to \infty$, using the Taylor series expansion, we obtain $\underset{\rho_U \to \infty}{\lim}G(b)-G(a)=\underset{\rho_U \to \infty}{\lim}\left (  \frac{1}{a} -\frac{1}{b} \right ) \ln \rho_U + \Delta_2$,
where $\Delta_2=\frac{\ln b}{b}-\frac{\ln a}{a} +\frac{1}{b} - \frac{1}{a}$.

For $\mathrm{U}_2$, it is sufficient to consider the asymptotic approximation of $\Xi_2(x_i)$ at high SNR. Therefore, when $\rho_U \to \infty$, we obtain $\underset{\rho_U \to \infty}{\lim}\Xi_2(x_i)=\Xi_4 (x_i)= \frac{2\ln2}{D}\sqrt[]{\frac{\eta}{x_i}-d^2 } 
\\
+ \frac{4}{D}\sqrt[]{\frac{\eta}{x_i} + d^2 }\arctan \left ( \frac{\sqrt[]{\frac{\eta}{x_i}-d^2 } }{\sqrt[]{\frac{\eta}{x_i} + d^2 }}  \right )
-   \frac{4d}{D} \arctan\left ( \frac{\sqrt[]{\frac{\eta}{x_i}-d^2 }}{d}  \right )$.
This completes the proof.
\end{proof}
\end{proposition}

Then, the high-SNR slopes of the users can be characterized by the following corollary.
\begin{corollary}
In the uplink of the NOMA system, the high-SNR slopes of $\mathrm{U}_1$ and $\mathrm{U}_2$ are given by $\mathcal{S}_{1}^{U,NOMA} = \frac{4\eta}{D^2}\left ( \frac{1}{a} - \frac{1}{b}\right)$ and $\mathcal{S}_{2}^{U,NOMA} = 0$. 

\begin{proof}
Based on \eqref{ER_uplink_NOMA_lim}, we can derive $\mathcal{S}_{1}^{U,NOMA}=\underset{\rho_U \to \infty}{\lim}\frac{d \ \mathbb{E}\left ( R_1^{U,NOMA} \right )^{\infty}  }{d \ \log_2 (\rho_U)}=\frac{4\eta}{D^2}\left ( \frac{1}{a} - \frac{1}{b}\right)$ and $\mathcal{S}_{2}^{U,NOMA}=\underset{\rho_U \to \infty}{\lim}\frac{d \ \mathbb{E}\left ( R_2^{U,NOMA} \right )^{\infty}  }{d \ \log_2 (\rho_U)}=0$.
This completes the proof.
\end{proof}
\end{corollary}

\section{Numerical Results}

In this section, computer simulations are employed to evaluate the performance of PA systems with multiple PAs and a single waveguide in both OMA and NOMA scenarios and to verify the correctness of the theoretical derivations. In the simulations, both uplink and downlink scenarios are analyzed, and the effects of different parameters on system performance are compared. Unless otherwise specified, the main simulation parameters are set as shown in Table~\ref{tab:Parameters}. The simulations adopt the Monte Carlo method, executed $1 \times 10^6$ times in MATLAB, with the averaged results taken as the final simulation outcomes to approximate practical scenarios and reduce computational errors.
\begin{table}
\centering
\caption{Parameters setting}
\begin{tabular}{|p{0.24\textwidth}|p{0.16\textwidth}|}
\hline
Side length of the room & $D = 20m$\\
\hline
Height of the waveguide& $d = 5m$\\
\hline
Carrier frequency& $f_{c} = 10\ GHz$\\
\hline
Bandwidth& $B = 1\ MHz$\\
\hline
Speed of light & $c = 3 \times10^8 \ \text{m/s}$\\
\hline
The power allocation coefficients for $\mathrm{U}_1$ and $\mathrm{U}_2$ (NOMA)&  $\alpha_1 = 0.8$ and $\alpha_2 = 0.2$\\
\hline
Target data-rate & $\tilde{R} = 0.5\ Mbps$\\
\hline
Number of points for Chebyshev-Gauss quadratures& $n=100$\\
\hline
\end{tabular}\label{tab:Parameters}
\end{table}

\subsection{Downlink}

\begin{figure}[t]
\centering
\includegraphics[width=\linewidth]{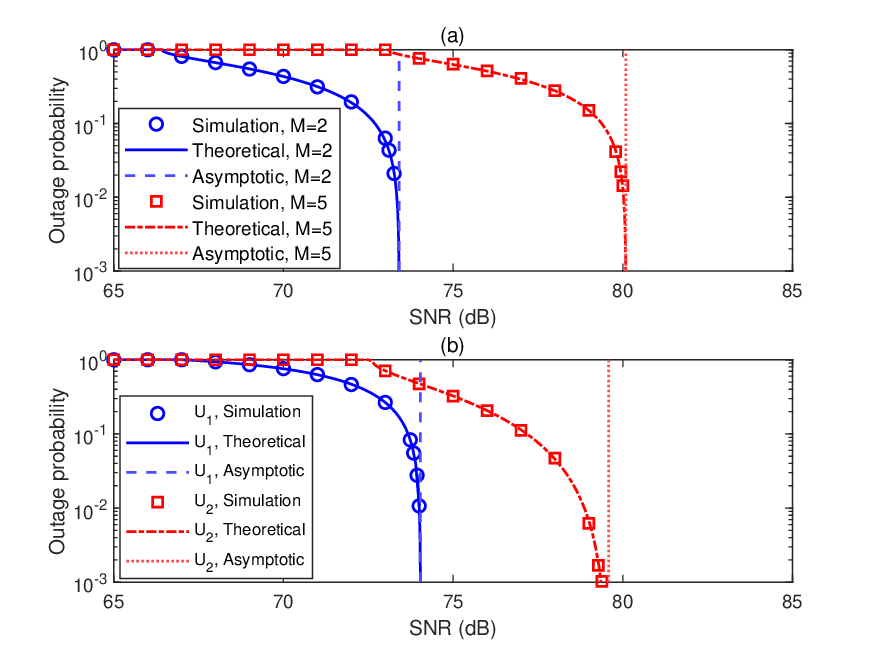}
\caption{Outage probability versus SNR for downlink OMA and NOMA systems. (a) illustrates the OMA system performance for $M=2$ and $M=5$, respectively. (b) illustrates the NOMA system in a two-user scenario.}
\label{fig:OMA-NOMA-Downlink}
\end{figure}
Fig.~\ref{fig:OMA-NOMA-Downlink} illustrates the OP versus the transmit SNR in the downlink for both OMA and NOMA systems. 
As shown in Fig.~\ref{fig:OMA-NOMA-Downlink}(a), in the PA scenario, OMA is employed to serve multiple users, where simulations are conducted for the cases of $M=2$ and $M=5$. It can be observed that the OP of users in both cases converges to zero in the high-SNR regime. However, when $M=2$, the SNR required for users to achieve an OP of $10^{-3}$ is significantly lower than that in the case of $M=5$. The reason is that as the number of users increases, the service time allocated to each user decreases, resulting in degraded outage performance. 
As shown in Fig.~\ref{fig:OMA-NOMA-Downlink}(b), NOMA is considered with two users, i.e., $M=2$. In this case, compared to $\mathrm{U}_2$, $\mathrm{U}_1$ achieves an OP of $10^{-3}$ at a much lower SNR, exhibiting a superior outage performance. This is because $\alpha_1=0.8 \gg \alpha_2=0.2$, meaning that $\mathrm{U}_1$ is allocated more power than $\mathrm{U}_2$. Consequently, even though $\mathrm{U}_1$ experiences a weaker channel condition, its reliability is still better than that of $\mathrm{U}_2$.

\begin{figure}[t]
\centering
\includegraphics[width=\linewidth]{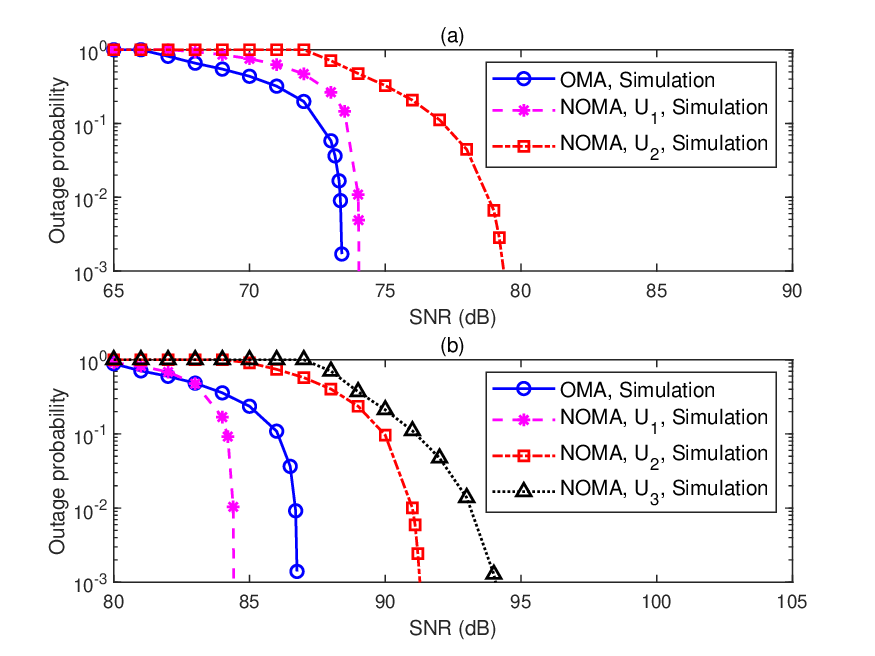}
\caption{Outage probability versus SNR for downlink OMA and NOMA systems. (a) and (b) illustrate case 1 and case 2 in \ref{SubSection:downlink-NOMA-OP-compare}, respectively.}
\label{fig:OMA_NOMA_Integration_OP_Downlink_semilogy}
\end{figure}
Fig.~\ref{fig:OMA_NOMA_Integration_OP_Downlink_semilogy} illustrates the comparison of outage performance between OMA and NOMA systems under different configurations.
For the case of $M=2$, with $\alpha_1=0.8$, $\alpha_2=0.2$, and $\tilde{R}=0.5~\text{Mbps}$, as shown in Fig.~\ref{fig:OMA_NOMA_Integration_OP_Downlink_semilogy}(a), the OMA users outperform all NOMA users in terms of OP performance, which is consistent with case 1 described in \ref{SubSection:downlink-NOMA-OP-compare}.
For the case of $M=3$, with $\alpha_1=0.80$, $\alpha_2=0.16$, $\alpha_3=0.04$, and $\tilde{R}=1.5~\text{Mbps}$, as shown in Fig.~\ref{fig:OMA_NOMA_Integration_OP_Downlink_semilogy}(b), user $\mathrm{U}_1$ in the NOMA system achieves better performance than any OMA user, whereas $\mathrm{U}_2$ and $\mathrm{U}_3$ exhibit inferior performance compared to OMA users, which is consistent with case 2 in \ref{SubSection:downlink-NOMA-OP-compare}.

\begin{figure}[t]
\centering
\includegraphics[width=\linewidth]{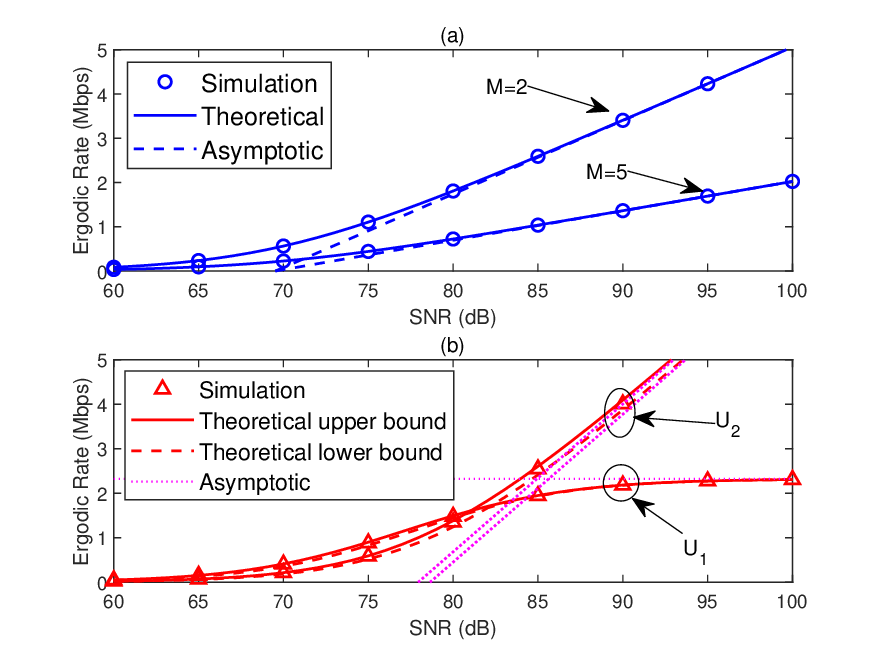}
\caption{Ergodic rate versus SNR in the downlink. (a) illustrates the OMA system performance for $M=2$ and $M=5$, respectively. (b) illustrates the NOMA system in a two-user scenario.}
\label{fig:OMA_NOMA_Rate_Downlink}
\end{figure}
Fig.~\ref{fig:OMA_NOMA_Rate_Downlink} shows ER versus transmit SNR for downlink OMA and NOMA systems. In Fig.~\ref{fig:OMA_NOMA_Rate_Downlink}(a), we simulate two cases with different numbers of users, i.e., $M=2$ and $M=5$.
The ER of each user in the OMA system decreases significantly as the number of users increases. This trend is consistent with the degradation observed in outage performance, as a larger number of users results in less communication resources allocated to each user within a given time slot. 
Fig.~\ref{fig:OMA_NOMA_Rate_Downlink}(b) presents the ER performance of a two-user NOMA system. It is evident that at low SNRs, $\mathrm{U}_1$ achieves a higher ER than $\mathrm{U}_2$.
As the SNR increases, the ER of $\mathrm{U}_1$ gradually converges to a constant ceiling, while that of $\mathrm{U}_2$ increases monotonically and approaches a fixed slope.
This phenomenon can be attributed to the fact that at low SNRs, $\mathrm{U}_1$ attains a higher ER as a result of receiving a larger portion of the transmit power.
Subsequently, since the SIC process eliminates the interference from $\mathrm{U}_1$’s signal, the ER of $\mathrm{U}_2$ increases monotonically with the SNR. In contrast, $\mathrm{U}_1$ treats $\mathrm{U}_2$’s signal as interference, causing its ER to reach a ceiling.

\begin{figure}[t]
\centering
\includegraphics[width=\linewidth]{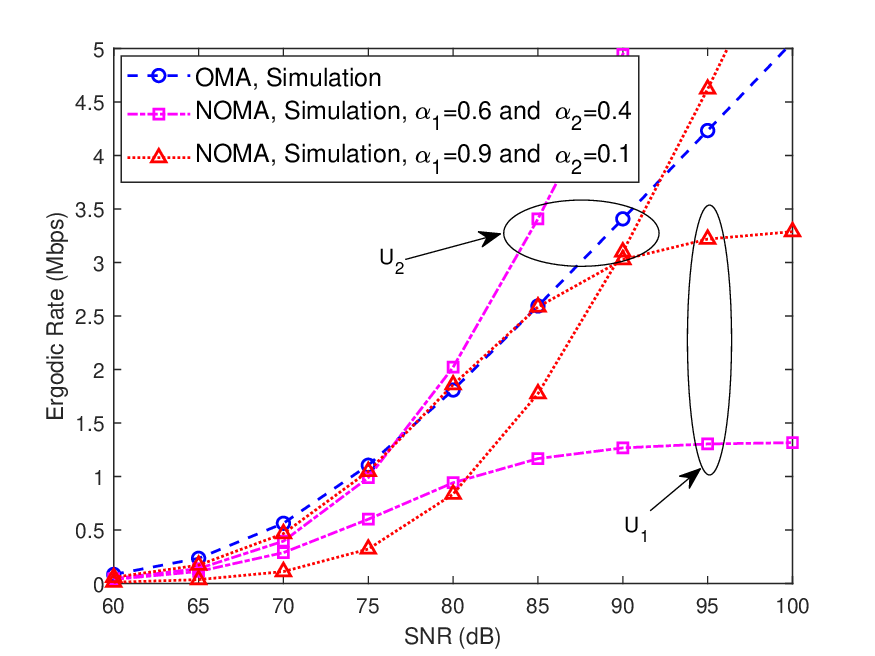}
\caption{Ergodic rate versus SNR for downlink OMA and NOMA systems, where $M=2$.}
\label{fig:OMAandNOMA_rate_simulation_compare}
\end{figure}
Fig.~\ref{fig:OMAandNOMA_rate_simulation_compare} illustrates the ER versus SNR for users in the downlink of OMA and NOMA systems. 
In the NOMA system, two distinct power allocation schemes are considered: $\alpha_1=0.6$, $\alpha_2=0.4$, and $\alpha_1=0.9$, $\alpha_2=0.1$. 
As shown in the figure, under the first power allocation scheme, the ER performance gap between the two users is more pronounced, with $\mathrm{U}_2$ exhibiting a significantly higher ER than $\mathrm{U}_1$. 
In contrast, in the second scheme, $\mathrm{U}_1$ achieves a higher ER in the low-SNR region owing to its higher power allocation. 
Furthermore, in high-SNR regime, the slope of the ER curve for OMA users remains between those of the two NOMA users. 
It can also be observed that $\mathrm{U}_2$ in the NOMA system consistently outperforms both the OMA user and $\mathrm{U}_1$ at high SNR. 
This is primarily because, in the high-SNR regime, the interference from $\mathrm{U}_1$ becomes negligible and can be regarded as weak noise, allowing $\mathrm{U}_2$ to more effectively exploit the full bandwidth and achieve superior performance compared to OMA users.

\subsection{Uplink}
\begin{figure}[t]
\centering
\includegraphics[width=0.9\linewidth]{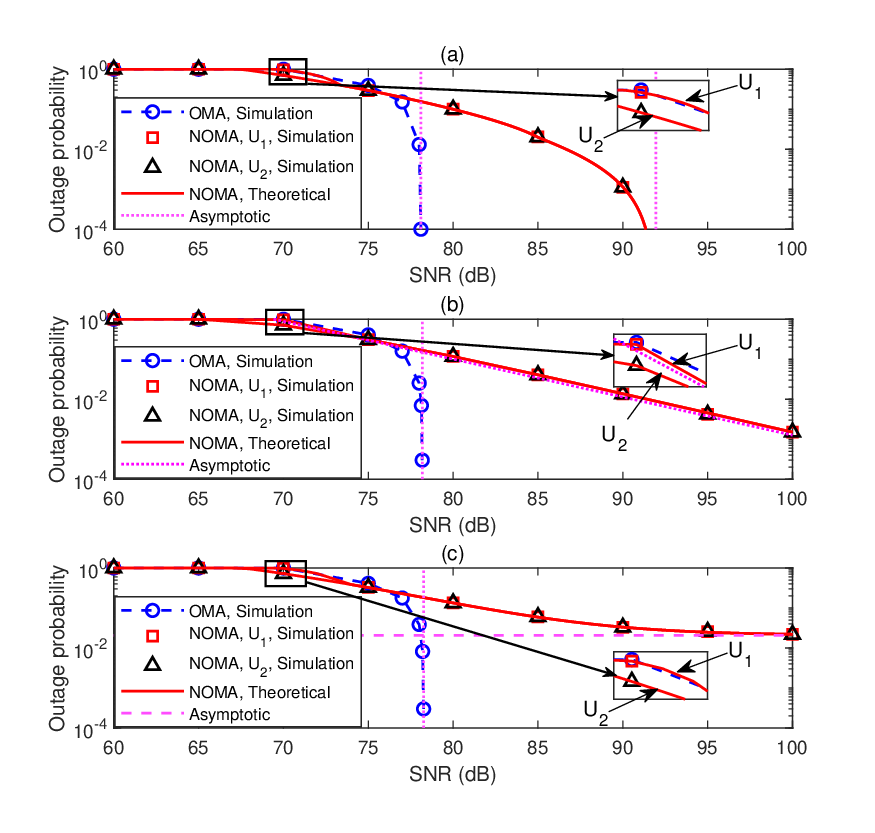}
\caption{OP versus SNR for uplink OMA and NOMA systems.}
\label{fig:Uplink_OP_NOMA_OMA}
\end{figure}
In Fig.~\ref{fig:Uplink_OP_NOMA_OMA} illustrates OP versus SNR for OMA and NOMA systems in the uplink.
In Fig.~\ref{fig:Uplink_OP_NOMA_OMA}(a), where $\tilde{R} = 0.99~\text{Mbps}$, the OP of both users decreases with increasing SNR and eventually approaches zero, which agrees with Proposition~\ref{Proposition:uplink-NOMA-OP-infty} where $\tilde{R} \in (0,1)$.
In Fig.~\ref{fig:Uplink_OP_NOMA_OMA}(b), where $\tilde{R} = 1~\text{Mbps}$, as the SNR increases, both users exhibit a linear decay in the OP when plotted on a logarithmic axis. 
In Fig.~\ref{fig:Uplink_OP_NOMA_OMA}(c), where $\tilde{R} = 1.01~\text{Mbps}$, all OP curves converge to a constant level rather than decreasing with SNR, which is consistent with Proposition~\ref{Proposition:uplink-NOMA-OP-infty} where $\tilde{R} \in (1,\infty)$.
Furthermore, the results demonstrate that the outage performance of the NOMA system is generally inferior to that of the OMA system in this scenario.

\begin{figure}[t]
\centering
\includegraphics[width=\linewidth]{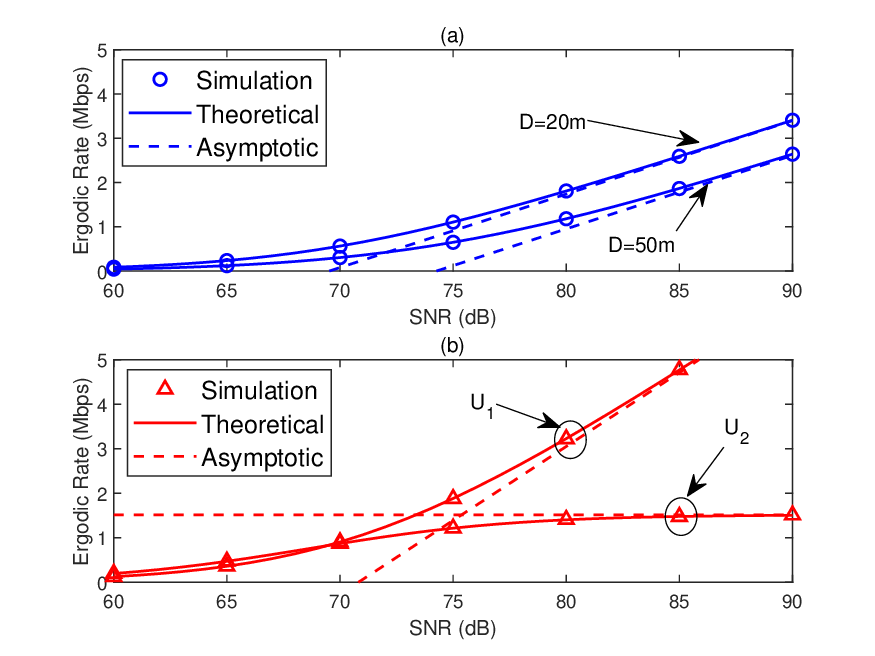}
\caption{Ergodic rate versus SNR for uplink OMA and NOMA systems. (a) illustrates the OMA system performance for $D=20$m and $D=50$m, respectively. (b) illustrates the NOMA system in a two-user scenario.}
\label{Uplink_rate_NOMA_OMA}
\end{figure}
Fig.~\ref{Uplink_rate_NOMA_OMA} illustrates the ER versus SNR for users in OMA and NOMA systems under the uplink scenario. In Fig.~\ref{Uplink_rate_NOMA_OMA}(a), simulations are conducted for the OMA system under different room widths to evaluate the impact of distance parameters on user ER performance. The results indicate that performance with $D=20$m outperforms that with $D=50$m. Moreover, both curves exhibit identical slopes in the high-SNR regime, aligning with the theoretical derivations. This is because larger room sizes increase the average user-antenna distance, resulting in greater path loss. Consequently, the transmission SNR decreases, resulting in a degradation of ER performance. 
Fig.~\ref{Uplink_rate_NOMA_OMA}(b) presents the simulation results for the two-user scenario. It can be observed that the ER of $\mathrm{U}_1$ continuously increases with SNR and eventually converges to a linearly increasing curve with a fixed slope. In contrast, the ER of $\mathrm{U}_2$ gradually stabilizes and converges to a specific value. This difference arises because, in the uplink scenario, the decoding order of SIC is opposite to that in the downlink, leading to an inverse trend of ER variation for the same user under different transmission directions.

\begin{figure}[t]
\centering
\includegraphics[width=\linewidth]{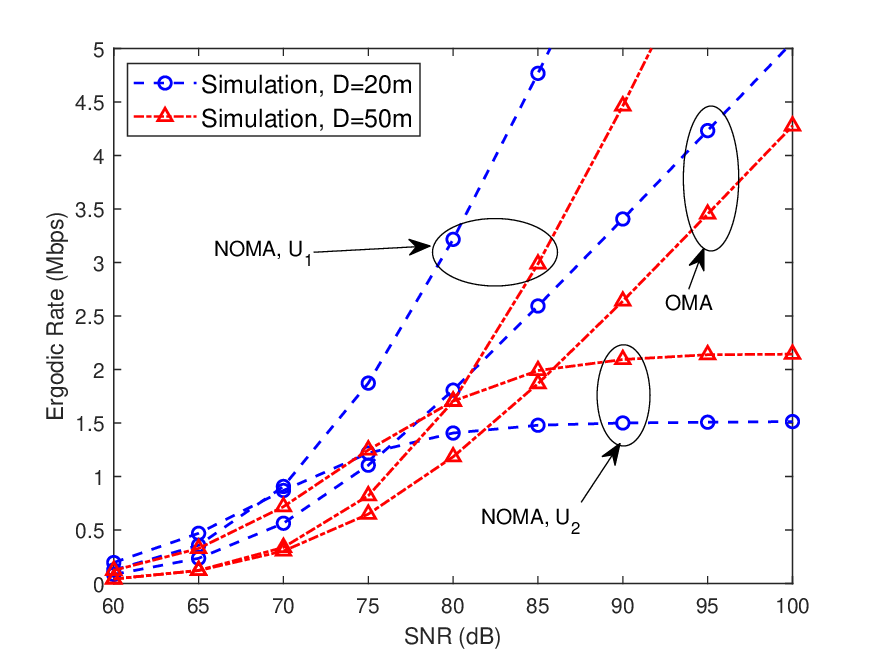}
\caption{Ergodic rate versus SNR for uplink OMA and NOMA systems, where $M=2$.}
\label{Uplink_rate_NOMAandOMA}
\end{figure}
Fig. \ref{Uplink_rate_NOMAandOMA} illustrates the ER of users in OMA and NOMA systems under different distance parameters for the uplink. The simulation results indicate that the ER performance of both OMA and NOMA users at $D=20$ m is significantly superior to that at $D=50$ m. This is attributed to the fact that increasing distance exacerbates channel attenuation, leading to performance degradation. Furthermore, in the low-SNR region, the ER of NOMA users is generally higher than that of OMA users. However, in the high-SNR regime, the overall performance of OMA users outperforms that of certain NOMA users. This phenomenon arises because the application of SIC in uplink NOMA causes the ER of all users, except for the last decoded one, to converge to a constant value.

Finally, the results demonstrates the strong agreement between the analytical results and the Monte Carlo simulations, validating the accuracy of the derived expressions.

\section{Conclusions}
This paper focuses on the PA system, an emerging technology that has recently attracted significant research interest. 
First, we conduct a performance analysis of OMA and NOMA schemes in the downlink scenario. 
The results demonstrate that in the downlink fixed-rate scenario, which scheme achieves better outage performance is determined by system parameters.
Second, a similar investigation is conducted for the uplink, revealing that the OP of NOMA users exhibits different decay rate in the high-SNR regime, depending on the specific rate thresholds. 
Finally, for both downlink and uplink transmissions, which scheme achieves better rate performance in the low-SNR regime still depends on system parameters.
However, in the high-SNR regime, the rate performance of OMA users is generally superior to that of NOMA users. 
Given these performance trade-offs, the selection of multiple access techniques for multi-room PA systems, along with the joint optimization of antenna positions and quantities, represents a significant direction for future research.

\begin{appendices}
\section{Proof of Theorem \ref{Theorem:downlink-NOMA-ER}}\label{Appen-theorem-1}
\renewcommand{\theequation}{\thesection.\arabic{equation}}
\setcounter{equation}{0}
In the NOMA scenario, the ER of the user can be given by
\begin{equation}\label{}
\begin{split}
\mathbb{E} \left (R_{m,m}^{D,NOMA}  \right ) =\mathbb{E}\left ( \log_2\left ( 1+   \mathrm{SINR}_{m,m}^{D,NOMA}   \right )  \right ).
\end{split}
\end{equation}
\subsection{Upper Bound of ER}
Due to the difficulty of obtaining a closed-form solution directly, we employ Jensen's inequality to derive its upper bound. By taking the derivative of $f_1(y)=\log_2\left ( 1+  \mathrm{SINR}_{m,m}^{D,NOMA}   \right )$, it is observed to be a concave function, hence it can be given by
\begin{equation}\label{}
\begin{split}
\mathbb{E}\left ( f_1(y)  \right ) \le  \log_2\left ( 1+   \mathbb{E}\left (f_2(y)\right )   \right )  ,
\end{split}
\end{equation}
where $f_2(y) = \mathrm{SINR}_{m,m}^{D,NOMA}$. Then we can obtain the following expression as $\mathbb{E}\left (f_2(y)\right )= \int_{\frac{\eta}{d^2+\frac{D^2}{4} } }^{\frac{\eta}{d^2} } \frac{y \rho_D \alpha_m}{y\rho_D  {\textstyle \sum_{j=m+1}^{M}\alpha_j}+M }
f_{Y_m}(y)dy$.
After defining $t = \sqrt{\frac{\eta}{y} - d^2}$, the expression can be given by
$
\mathbb{E}\left (f_2(y)\right )= \Lambda_3 \int_{0}^{\frac{D}{2} } \frac{t^{M-m}\left ( 1-\frac{2}{D}t  \right )^{m-1} }
{t^2 + d^2 + \frac{\eta \rho_D  {\textstyle \sum_{j=m+1}^{M}}\alpha_j }{M} } dt.
$
Let $\zeta = \sqrt{d^2 + \frac{\eta \rho_D \sum_{j=m+1}^{M} \alpha_j}{M}}$, then it can be further simplified as
\begin{equation}\label{}
\begin{split}
\mathbb{E}\left (f_2(y)\right )= \Lambda_3\sum_{k=0}^{m-1}\begin{pmatrix}m-1\\k\end{pmatrix}\left ( -\frac{2}{D}  \right )^k
\underbrace{\int_{0}^{\frac{D}{2} } \frac{t^{M-m+k}}{t^2+\zeta^2}dt }_{\xi_k}.
\end{split}
\end{equation}
Let $n_k = M - m + k$, and for $\xi_k$, we consider two separate cases:

\textit{Case 1:} When $n_k=2l$, $n_k$ is even. We have
$
\xi_k=\sum_{r=0}^{l-1} \frac{ (-1)^r \zeta^{2r}\left ( \frac{D}{2}  \right )^{2(l-r)-1} }{2(l-r)-1} + (-1)^l \zeta^{2l-1} \arctan\left ( \frac{D}{2\zeta}  \right ).
$

\textit{Case 2:} When $n_k=2l+1$, $n_k$ is odd. We have
$
\xi_k=\sum_{r=1}^{l-1} \frac{(-1)^r \zeta^{2r} \left ( \frac{D}{2}  \right )^{2(l-r)} }{2(l-r)}+(-1)^l\frac{\zeta^{2l}}{2} \mathrm{ln}\left ( \frac{D^2}{4\zeta^2}  +1\right ).
$

\subsection{Lower Bound of ER}
We define $f_3(y) = \frac{1}{f_2(y)}$, and thus we can be given by
$\mathbb{E}\left ( \log_2\left ( 1+ f_2(y)\right )   \right )=\mathbb{E}\left ( \log_2\left ( 1+ \frac{1}{f_3(y)} \right )   \right )$.
Since $\log_2\left( 1 + \frac{1}{f_3(y)} \right)$ is a convex function with respect to $f_3(y)$, by applying Jensen's inequality, we obtain a lower bound for the ER as 
$\mathbb{E}\left ( \log_2\left ( 1+ \frac{1}{f_3(y)} \right )   \right ) \ge  \log_2\left ( 1+ \frac{1}{\mathbb{E}\left (  f_3(y) \right )} \right )$.
For $\mathbb{E}\left( f_3(y) \right)$, it can be expressed as 
$\mathbb{E}\left (  f_3(y) \right )=\frac{\sum_{j=m+1}^{M}\alpha_j}{\alpha _m} +\frac{M}{\rho_D \alpha _m} \mathbb{E}\left ( \frac{1}{y}  \right ) $.
Therefore, we only need to compute $\mathbb{E}\left( \frac{1}{y} \right)$, which is given by 
$\mathbb{E}\left ( \frac{1}{y}  \right )
=M \begin{pmatrix}M-1\\m-1\end{pmatrix}\int_{\frac{\eta}{d^2+\frac{D^2}{4} } }^{\frac{\eta}{d^2} } \frac{1}{y}
\frac{\eta}{Dy^2\sqrt[]{\frac{\eta}{y}-d^2 } } \left ( 1- \frac{2}{D} \sqrt[]{\frac{\eta}{y}-d^2 } \right )^{m-1} \\
 \left ( \frac{2}{D} \sqrt[]{\frac{\eta}{y}-d^2 } \right )^{M-m} dy$.
Let $t = \sqrt{\frac{\eta}{y} - d^2}$, which can be simplified as follows
\begin{equation}\label{}
\begin{split}
\mathbb{E}\left ( \frac{1}{y}  \right ) =  \Lambda_4
 \sum_{k=0}^{m-1}\begin{pmatrix}m-1\\k\end{pmatrix} \left ( -\frac{2}{D}  \right )^k \\
\times\left ( \frac{\left ( \frac{D}{2}  \right )^{M-m+k+3} }{M-m+k+3}
+\frac{\left ( \frac{D}{2}  \right )^{M-m+k+1} d^2 }{M-m+k+1}   \right )  .
\end{split} 
\end{equation}
where $\Lambda_4=\frac{M}{\eta} \begin{pmatrix}M-1\\m-1\end{pmatrix}\left ( \frac{2}{D}  \right )^{M-m+1}$, $a$.
This completes the proof.

\section{Proof of Theorem \ref{OP:uplink-NOMA}}\label{Appendix:OP-uplink-NOMA}
\renewcommand{\theequation}{\thesection.\arabic{equation}}
\setcounter{equation}{0}

For $\mathrm{U_1}$, based on \eqref{eq:outage_U1_uplink_NOMA}, \eqref{Eq:uplink-NOMA-OP} and \eqref{Uplink:NOMA-OP-omga}, its OP can be given by  
\begin{equation}\label{}
\begin{split}
\mathbb{P}_1^{U,NOMA}=1-\underbrace{\int_{\omega_1}^{b} \int_{\omega_2}^{b}f_{|h_1|^2|h_2|^2}(y_1,y_2) dy_1 dy_2}_{I},
\end{split}
\end{equation}
where $a=\frac{\eta}{d^2+\frac{D^2}{4}}$, $b=\frac{\eta}{d^2}$, $\omega_1=\frac{2^{\tilde{R}_1} -1}{\rho_U}$ and $\omega_2=\left ( y_1+\frac{1}{ \rho_U}  \right ) \left ( 2^{\tilde{R}_2} -1\right )$. The joint distribution of the order statistics with absolutely continuous distribution is given by $f_{|h_1|^2,|h_2|^2}(y_1,y_2)=2f_Y(y_1)f_Y(y_2)$.
In this part, the integration with respect to $y_2$ is first carried out. Since $\text{U}_2$ is regarded as the strong channel user, the integration range of $y_2$ in the inner integral is determined as $y_2 \in \left[\max(a, \omega_2, y_1), b \right]$. Accordingly, we can derive as
\begin{equation}
I = \frac{4\eta}{D^2} \int_{\omega_1}^{b}\frac{\sqrt[]{\frac{\eta}{\max(a,\omega_2,y_1)}-d^2 } }{y_1^2 \sqrt[]{\frac{\eta}{y_1}-d^2 } } dy_1.
\end{equation}
It can be observed that when $\omega_1 \ge b$, $I=0$ holds, and therefore $\mathbb{P}_1^{U,NOMA}=1$. When $\omega_1 < b$, the value of $\max(a, \omega_2, y_1)$ needs to be examined in more detail. For instance, if $y_1 < \omega_2$, the inequality $\left ( 2-2^{\tilde{R}_2}\right)y_1 < \frac{2^{\tilde{R}_2}-1}{\rho_U}$ is satisfied, where $c$ is defined as $c=\frac{2^{\tilde R_2}-1}{\rho_U (2-2^{\tilde R_2})}$. Consequently, the analysis can be classified into four cases as follows.

\textit{Case 1:} When $2-2^{\tilde{R}_2} > 0$, i.e., $\tilde{R}_2 \in (0,1)$,  the value of $\max(a, \omega_2, y_1)$ can written as
\begin{equation}
\max(a, \omega_2, y_1) = 
\begin{cases} 
y_1, &  \max(a,c,\omega_1) \le y_1 \le b \\ 
& \text{and} \ c<b , \\ 
\omega_2, &  \max(a,\omega_1)\le y_1 \le \min(b,c)\\ 
& \text{and} \ c>\max(a,\omega_1).
\end{cases}
\end{equation}
In the first part, $\max(a, \omega_2, y_1)= y_1$, we can derive as
$
I_1 =  \frac{4\eta}{D^2} \left ( \frac{1}{\max(\omega_1, c, a)} - \frac{1}{b}  \right ) .
$
In the second part, $\max(a, \omega_2, y_1)= \omega_2$, we can derive as
\begin{equation}
\begin{split}
I_2 = &\frac{4\eta}{D^2} \int_{\max(\omega_1, a)}^{\min(b, c)}\frac{\sqrt[]{\frac{\eta}{\min(\omega_2,b)}-d^2 } }{y_1^2 \sqrt[]{\frac{\eta}{y_1}-d^2 } } dy_1   \\
\approx & \frac{2\eta \left (  \min(b,c)-\max(\omega_1, a)\right ) }{D^2} \sum_{i=1}^{n} w_i \ \sqrt[]{1-t_i^2} f\left ( x_i \right ) ,
\end{split}
\end{equation}
where $w_i=\frac{\pi}{n}$, $t_i=\mathrm{cos}\left( \frac{2i-1}{2n} \pi\right)$, $x_i=\frac{\min(b,c)-\max(\omega_1, a)}{2}t_i+\frac{\min(b,c)+\max(\omega_1, a)}{2}$, $f\left( x _i \right)=\frac{\sqrt[]{\frac{\eta}{\min(\omega_2(x_i),b)}-d^2 } }{x_i^2\sqrt[]{\frac{\eta}{x_i}-d^2 } }$, $\omega_2(x_i)=\left ( x_i+\frac{1}{\rho_U}\right)\left ( 2^{\tilde{R}_2}-1\right )$. Finally, by combining both parts, the OP of $\mathrm{U}_1$ is derived as  
\begin{equation}
\begin{split}
\mathbb{P}_1^{U,NOMA} \approx 1 - \left (I_1 + I_2 \right).
\end{split}
\end{equation}

\textit{Case 2:} When $2-2^{\tilde{R}_2} = 0$, i.e., $\tilde{R}_2= 0$, $\max(a, \omega_2, y_1)= \omega_2$ always holds. Therefore, we can derive as
$I=I_3 \approx \frac{2\eta \left (  b-\max(\omega_1, a)\right ) }{D^2} \sum_{i=1}^{n} w_i \ \sqrt[]{1-t_i^2} f\left ( x_i \right )
$, similarly, 
where $x_i=\frac{b-\max(\omega_1, a)}{2}t_i+\frac{b+\max(\omega_1, a)}{2}$. Finally, by combining both parts, the OP of $\mathrm{U}_1$ is derived as  
\begin{equation}
\begin{split}
\mathbb{P}_1^{U,NOMA} \approx 1 - I_3.
\end{split}
\end{equation}

\textit{Case 3:} When $2-2^{\tilde{R}_2} < 0$, i.e., $\tilde{R}_2 \in (1,\infty)$, the value of $\max(a, \omega_2, y_1)$ can written as
\begin{equation}
\max(a, \omega_2, y_1)= 
\begin{cases} 
y_1, &   \max(a,\omega_1)\le y_1 \le \min(b,c)\\ 
& \text{and} \ c>\max(a,\omega_1), \\
\omega_2, &\max(a,c,\omega_1) \le y_1 \le b \\ 
& \text{and} \ c<b.
\end{cases}
\end{equation}
In the first part, $\max(a, \omega_2, y_1)= y_1$, we can similarly derive as
$
I_4 = \frac{4\eta}{D^2} \left ( \frac{1}{\max(\omega_1, a)} - \frac{1}{\min(b, c)}  \right )$.
In the second part, $\max(a, \omega_2,y_1)= \omega_2$, we can derive as
$
I_5 \approx  \frac{2\eta \left (  b-\max(\omega_1, c, a)\right ) }{D^2} \sum_{i=1}^{n} w_i \ \sqrt[]{1-t_i^2} f\left ( x_i \right )
$,
where $x_i=\frac{b-\max(\omega_1, c, a)}{2}t_i+\frac{b+\max(\omega_1, c, a)}{2}$. Finally, by combining both parts, the OP of $\mathrm{U}_1$ is derived as  
\begin{equation}
\begin{split}
\mathbb{P}_1^{U,NOMA} \approx 1 - \left (I_4 + I_5 \right).
\end{split}
\end{equation}

For $\mathrm{U_2}$, based on \ref{eq:outage_U2_uplink_NOMA}, we can obtain
\begin{equation}\label{}
\begin{split}
\mathbb{P}_2^{U,NOMA}= &2\int_{a}^{b} \int_{y_1}^{\min(\omega_2,b)} f_Y(y_1)f_Y(y_2)dy_2dy_1 \\
= & \frac{4\eta}{D^2} \int_{a}^{b}\frac{1}{y_1^2} - \frac{\sqrt[]{\frac{\eta}{\min(\omega_2,b)}-d^2 } }{y_1^2 \sqrt[]{\frac{\eta}{y_1}-d^2 } } dy_1 ,
\end{split}
\end{equation}
where $a \le y_1 \le y_2 \le \min(\omega_2,b)$. For $y_1 < \omega_2$, it can be given by $\left ( 2-2^{\tilde{R}_2}\right)y_1 < \frac{2^{\tilde{R}_2}-1}{\rho_U}$. Then, we can obtain
\begin{equation}\label{}
y_1 \in
\begin{cases}
\left [ a, \min(b, c) \right ] , & 0 < \tilde{R}_2 < 1,\ c>a,  \\
\left [ a, b \right ] , & \tilde{R}_2=0,  \\
\left [ \max(a, c), b \right ] , & 1 < \tilde{R}_2 ,\ c<b,  \\
\end{cases}
\end{equation}
where $c=\frac{2^{\tilde R_2}-1}{\rho_U (2-2^{\tilde R_2})}$. Similarly, it can also be divided into three cases for derivation.

\textit{Case 1}: $y_1 \in \left [ a, \min(b, c) \right ]$, we can given by
$
\mathbb{P}_2^{U,NOMA} \approx \frac{4\eta }{D^2} \left (  \frac{1}{a}-\frac{1}{\min(b,c)} \right )   
-  \frac{2\eta \left (  \min(b,c)-a\right ) }{D^2} \sum_{i=1}^{n} w_i \ \sqrt[]{1-t_i^2} f\left ( x_i \right ) ,
$
where $w_i=\frac{\pi}{n}$, $t_i=\mathrm{cos}\left( \frac{2i-1}{2n} \pi\right)$, $x_i=\frac{\min(b,c)-a}{2}t_i+\frac{\min(b,c)+a}{2}$, $f\left( x _i \right)=\frac{\sqrt[]{\frac{\eta}{\min(\omega_2(x_i),b)}-d^2 } }{x_i^2\sqrt[]{\frac{\eta}{x_i}-d^2 } }$.

\textit{Case 2}: $y_1 \in \left [ a, b \right ]$, we can given by
$
\mathbb{P}_2^{U,NOMA} \approx 1 -  \frac{2\eta \left (  b-a\right ) }{D^2} \sum_{i=1}^{n} w_i \ \sqrt[]{1-t_i^2} f\left ( x_i \right ) ,
$
where $x_i=\frac{b-a}{2}t_i+\frac{b+a}{2}$.

\textit{Case 3}: $y_1 \in \left [ \max(a, c), b \right ]$, we can given by
$
\mathbb{P}_2^{U,NOMA} \approx \frac{4\eta }{D^2} \left (  \frac{1}{\max(a,c)}-\frac{1}{b} \right ) -  \frac{2\eta \left (  b-\max(a,c)\right ) }{D^2} \sum_{i=1}^{n} w_i \ \sqrt[]{1-t_i^2} f\left ( x_i \right ) ,
$
where $x_i=\frac{b-\max(a,c)}{2}t_i+\frac{b+\max(a,c)}{2}$.
This completes the proof.

\section{Proof of Proposition \ref{Proposition:uplink-NOMA-OP-infty}}\label{Appendix:OP-uplink-NOMA-infty}
\renewcommand{\theequation}{\thesection.\arabic{equation}}
\setcounter{equation}{0}

For $\tilde{R}_2 \in (0,1)$, in the high-SNR regime, we can derive that $\lim_{\rho_U \to \infty}\max(\omega_1, c, a)=a$. Consequently, we have $\lim_{\rho_U \to \infty}I_1=1$. This ultimately leads to $\lim_{\rho_U \to \infty}\mathbb{P}_1^{U,NOMA}=0$. Similarly, it follows that $\lim_{\rho_U \to \infty}\mathbb{P}_2^{U,NOMA}=0$.

For $\tilde{R}_2 = 0$, by setting $\varepsilon = \frac{1}{\rho_U}$, $I(\varepsilon)=\int_{a}^{b-\varepsilon }f(y_1,\varepsilon)dy_1$, where $f(y_1,\varepsilon)=\frac{\sqrt[]{\frac{\eta}{y_1+\varepsilon }-d^2 } }{y_1^2 \sqrt[]{\frac{\eta}{y_1}-d^2 } }$. Since $I(0)=1$, we obtain
\begin{equation}\label{}
\begin{split}
\mathbb{P}_1^{U,NOMA,\infty}= \lim_{\varepsilon \to 0} -\frac{4\eta}{D^2}\left ( I(\varepsilon ) -I(0)\right ).
\end{split}
\end{equation}
For $I(\varepsilon ) -I(0)$, we obtain
$I(\varepsilon ) -I(0) = -\int_{b-\varepsilon}^{b}f(y_1,\varepsilon)dy_1 +\int_{a}^{b}\left ( f(y_1,\varepsilon)-f(y_1,0) \right ) dy_1$.
Since $\frac{\eta}{b}-d^2=0$, $f(y_1,\varepsilon)$ exhibits a singularity at $y_1=b$. By setting $g(y_1)=\frac{\eta}{b}-d^2$ and $u=b-y$, and performing a Taylor expansion around $y_1=b$, we obtain $g(y_1) = \frac{\eta u}{b^2} + O(u^2)$.
Consequently, $\sqrt{\frac{\eta}{y_1}-d^2}=\frac{\sqrt[]{u} }{b}\sqrt[]{t}$. Similarly, we have
$\sqrt{\frac{\eta}{y_1+\varepsilon}-d^2}\approx d \ \sqrt[]{\frac{u-\varepsilon}{b} }$, $y_1^2\approx b^2$. Hence, $f(b-u,\varepsilon)=\frac{1}{b^2}\sqrt[]{1-\frac{\varepsilon }{u} } \ (u>\varepsilon)$. Then, by utilizing the Mean Value Theorem for Integrals, we can obtain
$-\int_{b-\varepsilon}^{b}f(y_1,\varepsilon)dy_1 \approx -\frac{\varepsilon }{b^2}$.
In $u \in [\varepsilon,\delta ]$, where $\delta$ is a small constant independent of $\varepsilon$, this term provides the dominant contribution. Thus, we can obtain $\int_{a}^{b}\left ( f(y_1,\varepsilon)-f(y_1,0) \right ) dy_1 \approx -\frac{\varepsilon}{2b^2}\ln \left (\frac{1}{\varepsilon} \right)$. 
Finally, since the similarity of OP formulas for $\mathrm{U}_1$ and $\mathrm{U}_2$ in the high-SNR regime, we have
\begin{equation}\label{}
\begin{split}
\mathbb{P}_1^{U,NOMA,\infty}=\mathbb{P}_2^{U,NOMA,\infty} \approx \frac{2d^4}{D^2\eta}\frac{\ln \rho_U}{\rho_U}.
\end{split}
\end{equation}
Then, $\mathbb{P}_{\{ 1,2\}}^{U,NOMA}$ decays as $\rho_U^{-1}$ in the high-SNR regime since the term $\frac{\ln \rho_U}{\rho_U}$ is dominated by the linear growth of $\rho_U$ as $\rho_U \to \infty$, effectively yielding a constant slope of $-1$ in the log-log scale.

For $\tilde{R}_2 \in (1,\infty)$, we have $\lim_{\rho_U \to \infty}\max(\omega_1, c, a)=a$. Hence, $\lim_{\rho_U \to \infty}I_4=0$, $I_9=\lim_{\rho_U \to \infty}I_5\approx \frac{2\eta (b-a ) }{D^2} \sum_{i=1}^{n} w_i \ \sqrt[]{1-t_i^2} f\left ( x_i \right ) $, where $f(x_i)=\frac{\sqrt[]{\frac{\eta}{\min \left ( x_i\left ( 2^{\tilde{R}_2}-1,b\right)\right) }-d^2 } }{x_i^2 \sqrt[]{\frac{\eta}{x_i}-d^2 } }$, $x_i=\frac{b-a}{2}t_i+\frac{b+a}{2}$. Finally, we can derive that $\lim_{\rho_U \to \infty}\mathbb{P}_1^{U,NOMA}=\lim_{\rho_U \to \infty}\mathbb{P}_2^{U,NOMA}=1-I_9$.
This completes the proof.

\section{Proof of Theorem \ref{theorem-uplink-NOMA-ER-two-User}}\label{Appen-theorem-uplink-NOMA-ER-two-User}
\renewcommand{\theequation}{\thesection.\arabic{equation}}
\setcounter{equation}{0}
For $\mathrm{U}_1$, based on the ordered channel distribution, the PDF of $\mathrm{U}_1$ is given by $f_1(y)=2f_{Y}(y)\left ( 1-F_{Y}(y) \right )$. Therefore, we have
\begin{equation}\label{}
\begin{split}
\mathbb{E}\left ( R_1^{U,NOMA} \right ) = \frac{4\eta}{D^2}\underbrace{  \int_{a}^{b}\log_2\left ( 1+\rho_U y \right ) \frac{1}{y^2}dy }_{\Xi_1},
\end{split}
\end{equation}
where $a= \frac{\eta}{d^2 + \frac{D^2}{4}}$, $b= \frac{\eta}{d^2}$. By integrating $\Xi_1$, the result can be obtained as $\Xi_1=G(b)-G(a)$, where $G(x)=\frac{\ln(1+\rho_U x)}{x}-\rho_U \ln x+\rho_U\ln(1+\rho_U x)$.

For $\mathrm{U}_2$, the joint PDF is given by $f_{|h_1|^2 |h_2|^2}(y_1, y_2)=2f_{Y}(y_1)f_{Y}(y_2)$, which can be expressed as
\begin{equation}
\begin{split}
&\mathbb{E}\left ( R_2^{U,NOMA} \right ) = \int_{a}^{b}\int_{a}^{b}\log_2\left ( 1+\frac{y_2}{y_1+\frac{1}{\rho_U} }  \right )\\
&\times 2f_1(y_1)f_2(y_2)dy_1dy_2 \\
&=\frac{2}{\ln 2}\underbrace{ \int_{a}^{b}\underbrace{\int_{y_1}^{b} \ln \left ( 1+\frac{y_2}{y_1+\frac{1}{\rho_U} } \right )d F_Y(y_2)}_{\Xi_2}f_Y(y_1) dy_1}_{\Xi_3}.
\end{split}
\end{equation}
where $a= \frac{\eta}{d^2 + \frac{D^2}{4}}$, $b= \frac{\eta}{d^2}$. For $\Xi_2$, we obtain 
$\Xi_2=\frac{2}{D} \ln\left ( 1+\frac{y_1 }{y_1 + \frac{1}{\rho_U } }  \right ) \sqrt[]{\frac{\eta}{y_1}-d^2 }
+ \frac{4}{D}\sqrt[]{\frac{\eta}{y_1 + \frac{1}{\rho_U } } + d^2 } \\
\arctan \left ( \frac{\sqrt[]{\frac{\eta}{y_1}-d^2 } }{\sqrt[]{\frac{\eta}{y_1 + \frac{1}{\rho_U } } + d^2 }}  \right ) 
-   \frac{4d}{D} \arctan\left ( \frac{\sqrt[]{\frac{\eta}{y_1}-d^2 }}{d}  \right )$.
Then, for $\Xi_3$, we have $\Xi_3=\int_{a}^{b}f_Y(y_1)\Xi_2dy_1$.
Due to the complexity of the integrand, we adopt the Chebyshev--Gauss quadrature method by setting $y_1 = \frac{b-a}{2}t + \frac{b+a}{2}$. Finally, we obtain the expression
\begin{equation}\label{}
\begin{split}
&\mathbb{E}\left ( R_2^{U,NOMA} \right )  \approx\frac{b-a}{\ln2} \sum_{i=1}^{n}w_i \sqrt{1-t_i^2} H\left( x _i \right) ,
\end{split}
\end{equation}
where $w_i=\frac{\pi}{n}$, $t_i=\mathrm{cos}\left( \frac{2i-1}{2n} \pi\right)$, $x_i=\frac{b-a}{2}t_i+\frac{b+a}{2}$, $H\left( x _i \right)=f_Y(x_i)\Xi_2(x_i)$.
This completes the proof.

\end{appendices}
\bibliographystyle{IEEEtran}
\bibliography{ref}

\end{document}